\documentclass{article}
\usepackage{color, xcolor, colortbl}
\usepackage{graphicx}
\usepackage{geometry}
\usepackage{amsthm,amsmath,amssymb,amsfonts}
\usepackage{algorithm}
\usepackage{algorithmic}
\usepackage{dcolumn}
\usepackage{epstopdf}
\usepackage{bm}
\usepackage[caption=false]{subfig}
\usepackage{appendix}
\usepackage{multirow}
\usepackage{braket}
\usepackage[english]{babel}
\usepackage{hyperref}
\usepackage[capitalize]{cleveref}
\usepackage{xpatch}
\usepackage{tikz}
\usepackage{adjustbox}
\usepackage{xspace}

\renewcommand{\Re}{\operatorname{Re}}
\renewcommand{\Im}{\operatorname{Im}}

\newcommand{\poly}{\operatorname{poly}}

\let\originalleft\left
\let\originalright\right
\renewcommand{\left}{\mathopen{}\mathclose\bgroup\originalleft}
\renewcommand{\right}{\aftergroup\egroup\originalright}

\setlength{\delimitershortfall}{6pt}
\delimiterfactor=800

\newcommand{\mc}[1]{\mathcal{#1}}

\newcommand{\wt}[1]{\widetilde{#1}}

\newcommand{\abs}[1]{\left\lvert#1\right\rvert}
\newcommand{\norm}[1]{\left\lVert#1\right\rVert}

\newcommand{\ud}{\,\mathrm{d}}
\newcommand{\Or}{\mathcal{O}}

\newcommand{\RR}{\mathbb{R}}
\newcommand{\CC}{\mathbb{C}}

\newtheorem{thm}{\protect\theoremname}
\newtheorem{lem}[thm]{\protect\lemmaname}

\newtheorem{cor}[thm]{\protect\corollaryname}

\newtheorem{res}[thm]{Result}

\providecommand{\definitionname}{Definition}
\providecommand{\assumptionname}{Assumption}
\providecommand{\corollaryname}{Corollary}
\providecommand{\lemmaname}{Lemma}
\providecommand{\propositionname}{Proposition}
\providecommand{\remarkname}{Remark}
\providecommand{\theoremname}{Theorem}


%

\usetikzlibrary{fit}
\tikzset{%
  highlight/.style={rectangle,rounded corners,fill=blue!15,draw,fill opacity=0.3,thick,inner sep=0pt}
}

%


\usepackage{authblk}

\usepackage[normalem]{ulem}

\begin{document}

\title{Laplace transform based quantum eigenvalue transformation via linear combination of Hamiltonian simulation}
\author[1,2]{Dong An}
\author[2,3]{Andrew M.\ Childs}
\author[4,5,6]{Lin Lin}
\author[7,8]{Lexing Ying}

\affil[1]{Beijing International Center for Mathematical Research, Peking University}
\affil[2]{Joint Center for Quantum Information and Computer Science \authorcr University of Maryland, College Park}
\affil[3]{Department of Computer Science and
Institute for Advanced Computer Studies \authorcr University of Maryland, College Park}
\affil[4]{Department of Mathematics, University of California, Berkeley}
\affil[5]{Applied Mathematics and Computational Research Division \authorcr Lawrence Berkeley National Laboratory, Berkeley}
\affil[6]{Challenge Institute for Quantum Computation, University of California, Berkeley}
\affil[7]{Department of Mathematics, Stanford University}
\affil[8]{Institute for Computational and Mathematical Engineering, Stanford University}

\date{\today}
\maketitle

\begin{abstract}
Eigenvalue transformations, which include solving time-dependent differential equations as a special case, have a wide range of applications in scientific and engineering computation.
While quantum algorithms for singular value transformations are well studied, eigenvalue transformations are distinct,  especially for non-normal matrices.
We propose an efficient quantum algorithm for performing a class of eigenvalue transformations that can be expressed as a certain type of matrix Laplace transformation. This allows us to significantly extend the recently developed linear combination of Hamiltonian simulation (LCHS) method [An, Liu, Lin, Phys. Rev. Lett. 131, 150603, 2023; An, Childs, Lin, arXiv:2312.03916] to represent a wider class of eigenvalue transformations, such as powers of the matrix inverse, $A^{-k}$, and the exponential of the matrix inverse, $e^{-A^{-1}}$. The latter can be interpreted as the solution of a mass-matrix differential equation of the form $A u'(t)=-u(t)$. We demonstrate that our eigenvalue transformation approach can solve this problem without explicitly inverting $A$, reducing the computational complexity.

\end{abstract}

\tableofcontents

\section{Introduction}

Quantum computers are expected to solve certain computational problems much more efficiently than classical computers, such as factoring large integers~\cite{Shor1997} and simulating the dynamics of quantum systems~\cite{BerryChildsKothari2015,LowChuang2017,GilyenSuLowEtAl2019,ChildsSuTranEtAl2021}. 
Large scale and high dimensionality appear ubiquitously in scientific and engineering computation and have posed significant challenges for classical computers, motivating the development of efficient quantum algorithms for such problems.

Many scientific computing tasks can be expressed as eigenvalue transformations. 
Suppose $A\in\CC^{N\times N}$ is diagonalizable as $A=\mc{V} D \mc{V}^{-1}$, where $\mc{V}$ is an invertible matrix and $D$ is a diagonal matrix. Let $h:\CC\to \CC$ be a function defined on the spectrum of $A$. The task of quantum eigenvalue transformation is to encode the matrix 
\begin{equation}
h(A):=\mc{V} h(D) \mc{V}^{-1}
\label{eqn:eigenvalue_transformation}
\end{equation}
using a unitary matrix that can be efficiently implemented on a quantum computer, or to prepare a quantum state proportional to $h(A) \ket{\psi}$ for an input state $\ket{\psi}$. 
The most well-known instance of this problem involves the linear time-independent differential equation
\begin{equation}\label{eqn:firstorder_hom}
\frac{\ud u(t)}{\ud t} = -A u(t), \quad 0\le t\le T.
\end{equation}
The solution can be expressed as a matrix function $u(T)=e^{-AT} u(0)$, so~\cref{eqn:firstorder_hom} can be solved by performing the matrix exponential $e^{-AT}=\mc{V} h(D) \mc{V}^{-1}$, which is an eigenvalue transformation with $h(z)=e^{-zT}$.

Certain eigenvalue transformation problems, such as the matrix inverse $A\mapsto A^{-1}$,  can be formulated as singular value transformation (SVT) problems.\footnote{Given a matrix $A$ with singular value decomposition $A=V\Sigma W^{\dag}$, the singular value transformation with function $h$ produces the matrix $V h(\Sigma) W^\dag$. The matrix inverse can be implemented by applying a singular value transformation on the hermitian conjugate of $A$ with function $h(x)=1/x$, since $(A^{\dag})^{-1}=V \Sigma^{-1} W^{\dag}$. 
} 
This enables the usage of efficient quantum algorithms such as the quantum singular value transformation (QSVT)~\cite{GilyenSuLowEtAl2019}. However, most eigenvalue transformations, including matrix exponentials for general matrices, cannot be reformulated in such a fashion. Other examples include powers of the matrix inverse $A\mapsto A^{-k}$ ($k\in\RR_{+}$), the exponential of the matrix inverse $A\mapsto e^{-A^{-1}}$, and the matrix sign function $A\mapsto {\rm sgn}(A-\mu I)$ for a given $\mu \in \RR$, to name a few.
Two primary resources should be accounted for in the cost of preparing $h(A)\ket{\psi}$: the number of queries to the matrix $A$ and the number of queries to the input state $\ket{\psi}$.

\subsection{Related works}

We are aware of two strategies for implementing general quantum eigenvalue transformations. The first strategy is the contour integral-based formulation \cite{TakahiraOhashiSogabeEtAl2021,TongAnWiebeEtAl2021}. When the function $h$ is analytic within a region enclosing all eigenvalues of $A$ in the complex plane, the contour integral technique can be described as an integral over matrix inverses. After discretization of the contour, the eigenvalue transformation $h(A)$ can be implemented using a linear combination of matrix inverses, where each matrix inverse can be implemented by solving a quantum linear system problem (QLSP), for which many algorithms including QSVT are available~\cite{HarrowHassidimLloyd2009,Ambainis2012,ChildsKothariSomma2017,SubasiSommaOrsucci2019,LinTong2020,AnLin2022,CostaAnSandersEtAl2022,Dalzell2024,LowSu2024quantumlinearalgorithmoptimal}.

The second strategy for a general eigenvalue transformation uses Taylor expansion of $h(z)$ into a linear combination of monomials, and implements each matrix monomial using products of block encodings~\cite{GilyenSuLowEtAl2019}. 
However, the monomial coefficients can be arbitrarily large even for well-behaved $h(z)$ such as Chebyshev polynomials, making the Taylor expansion approach highly inefficient. 
The recent development of quantum eigenvalue transformation (QEVT) by Low and Su \cite{LowSu2024} overcomes this problem via a history state polynomial (HSP) using \textit{stable} polynomial expansions. For instance, for a non-normal matrix $A$ with real eigenvalues within $[-1,1]$, this approach introduces a Chebyshev history state to encode all Chebyshev polynomials up to a specified order simultaneously. When the eigenvalues are not real but lie within the unit circle, the method constructs a Faber polynomial approximation via a Faber polynomial history state. Using these history state polynomials, the eigenvalue transformation can be implemented again by solving a QLSP.

Both the contour integral method and the history state polynomial method described above can be used to implement the matrix exponential $e^{-AT}$. However, the connection between the matrix exponential and the differential equation \cref{eqn:firstorder_hom} allows us to employ special techniques for this particular matrix eigenvalue problem.
One type of algorithm, called the linear system approach \cite{Berry2014,BerryChildsOstranderEtAl2017,ChildsLiu2020,Krovi2022,BerryCosta2022}, involves discretizing the time interval $[0, T]$ into $L$ shorter intervals of length $\Delta t = T/L$. This method uses a short-time propagator to propagate from one time interval to another and formulates a dilated linear system to encode the history state across the interval $[0, T]$. While the best existing algorithm of this type~\cite{BerryCosta2022} may offer a near-optimal number of queries to $A$, it does not optimize the initial state preparation cost since the construction of such algorithm uses the quantum linear system algorithm in~\cite{CostaAnSandersEtAl2022} as a subroutine, which requires multiple copies of the initial state. 

The time-marching method \cite{FangLinTong2023} also employs a short-time propagator as in the linear system approach and directly encodes the matrix exponential $e^{-AT}$ by performing repeated postselections of the short-time propagators. 
It can achieve optimal state preparation cost, but the complexity with respect to the number of queries to $A$ is sub-optimal. 
Furthermore, to our knowledge, neither the linear system approach nor the time-marching approach is well-suited for general eigenvalue transformations since they heavily rely on discretizing differential equations in the form of~\cref{eqn:firstorder_hom}. 

The linear combination of Hamiltonian simulation (LCHS) method \cite{AnLiuLin2023} significantly simplifies the process of constructing a block encoding $e^{-AT}$ when the matrix $-A$ is dissipative, in the sense that the Hermitian part $L=(A+A^{\dag})/2$ of the matrix $A$ is a positive semidefinite matrix. The core of LCHS is an identity expressing $e^{-AT}$ as a linear combination of a continuously parameterized family of Hamiltonian simulation problems, which can then be implemented using any quantum Hamiltonian simulation algorithm without resorting to specific short-time integrators. 
It also achieves optimal state preparation cost for matrix exponentiation. A closely related approach for implementing $e^{-AT}$ is the Schr\"odingerisation method~\cite{JinLiuYu2022}, which converts a non-unitary differential equation into a dilated Schr\"odinger equation with an additional momentum dimension, subject to specific initial conditions.
The approaches in \cite{AnLiuLin2023,JinLiuYu2022} are first-order schemes, and the matrix query complexity scales linearly with $1/\epsilon$, where $\epsilon$ is the target precision.
Recently, LCHS has been generalized into a family of identities for expressing $e^{-AT}$ with exponentially improved accuracy~\cite{AnChildsLin2023}. This led to the first quantum algorithm to block-encode matrix exponentials with both optimal state preparation cost and near-optimal scaling in matrix queries across all parameters, including the target precision $\epsilon$ and simulation time $T$. 

The central question of this work is as follows:

\begin{quote} \textit{Can the linear combination of Hamiltonian simulation method be generalized to efficiently represent matrix eigenvalue transformations beyond matrix exponentiation?} \end{quote}

\subsection{Contribution and main idea}

In this work, we propose the following type of eigenvalue transformation.
Let $h(z)$ be a function represented as the Laplace transform of a function ${g}(t) \in L^1(\RR_+)$: 
\begin{equation}\label{eqn:laplace_forward}
h(z) = \int_{0}^{\infty} {g}(t) e^{-zt} \ud t, \quad \Re z\ge 0.
\end{equation}
If $-A$ is dissipative (i.e., all the eigenvalues of $A$ are in the right half-plane), then the eigenvalue transformation $h(A)$ takes the form 
\begin{equation}\label{eqn:matrix_laplace}
h(A)=\int_{0}^{\infty} {g}(t) e^{-At} \ud t.
\end{equation}
This is a linear combination of matrix exponentials, and each matrix exponential $e^{-At}$ can be further written as a linear combination of Hamiltonian simulation problems using the technique in~\cite{AnChildsLin2023}.
As a result, $h(A)$ is formulated as a double integral of continuously parameterized Hamiltonian simulation problems. 
We call this the \textit{Laplace transform based linear combination of Hamiltonian simulations} (Lap-LCHS). 
\begin{res}[Informal version of~\cref{thm:LCHS_hardy}]\label{res:Lap_LCHS}
    Let $A=L+iH$, where $L=\frac{A+A^{\dag}}{2} \succeq 0$ and $H=\frac{A-A^{\dag}}{2i}$ are both Hermitian matrices. 
    For a function $h(z) = \int_{0}^{\infty} {g}(t) e^{-zt} \ud t$ that can be expressed by a Laplace transformation with $g(t)$ being its inverse Laplace transform, we have 
    \begin{equation}\label{eqn:Lap_LCHS_intro}
    h(A) = \int_0^{\infty} \int_{\mathbb{R}} \frac{f(k){g}(t)}{ 1-ik} e^{-it (kL+H)} \ud k \ud t. 
    \end{equation}
    Here\footnote{In fact there exists a large family of kernel functions $f(k)$ that can make~\cref{res:Lap_LCHS} hold (see~\cref{thm:LCHS_hardy}). The one specified here is nearly optimal in the sense that it achieves almost the fastest poossible asymptotic decay along the real axis~\cite[Proposition 7]{AnChildsLin2023}.} $f(k) = \frac{1}{2\pi e^{-2^\beta} e^{(1+iz)^{\beta}} }$, $\beta \in (0,1)$.
\end{res}

Using~\cref{res:Lap_LCHS}, we construct an efficient quantum algorithm to implement the matrix function $h(A)$. 
We truncate and discretize the double integral in~\cref{eqn:Lap_LCHS_intro}, approximating $h(A)$ by a weighted sum of Hamiltonian simulation operators $e^{-it(kL+H)}$ for some specific choices of $t$ and $k$. 
Then our quantum Lap-LCHS algorithm implements each Hamiltonian simulation operator by the optimal time-independent Hamiltonian simulation algorithm based on QSVT~\cite{GilyenSuLowEtAl2019}, and performs the weighted sum using the linear combination of unitaries (LCU) technique~\cite{ChildsWiebe2012,ChildsKothariSomma2017}. 

We establish a detailed complexity estimate for the Lap-LCHS algorithm. 
Here we present the cost for preparing the normalized state $\frac{h(A)\ket{\psi}}{\norm{h(A)\ket{\psi}}}$. We also analyze the cost of block encoding $h(A)$ in \cref{thm:complexity_block_encoding}. 

\begin{res}[Informal version of~\cref{cor:complexity_state}]\label{res:Lap_LCHS_cost}
    Let $A=L+iH$, where $L=\frac{A+A^{\dag}}{2}\succeq 0$ and $H=\frac{A-A^{\dag}}{2i}$ are both Hermitian matrices. 
    For a given function $h(z)$, we can prepare the state $h(A)\ket{\psi}/\norm{h(A)\ket{\psi}}$ with error at most $\epsilon$ using $\widetilde{\Or}\left( \frac{\norm{{g}}_{L^1(\RR_+)}}{\norm{h(A) \ket{\psi}}} \left(\alpha_A T \left(\log\left(\frac{1}{\epsilon}\right)\right)^{1+o(1)} \right) \right)$ queries to the matrix $A$ and $\Or\left( \frac{\norm{{g}}_{L^1(\RR_+)}}{\norm{h(A) \ket{\psi}}} \right)$ queries to the input state $\ket{\psi}$. 
    Here $\alpha_A \geq \norm{A}$, $g(t)$ is the inverse Laplace transform of $h(z)$, and $T$ is a truncation parameter such that $\norm{{g}}_{L^1((T,\infty))} = \Or(\epsilon \norm{h(A)\ket{\psi}})$. 
\end{res}

In~\cref{res:Lap_LCHS_cost}, the query complexity to the matrix $A$ depends linearly on the truncation parameter $T$, which depends implicitly on the error $\epsilon$. Specifically, $T$ is chosen such that the integral $\int_T^{\infty} |g(t)| \ud t $ is bounded by $\Or(\epsilon \norm{h(A)\ket{\psi}})$. This may introduce an additional polynomial computational overhead with respect to $\epsilon$ if $g(t)$ decays only polynomially as $t \rightarrow \infty$. Consequently, the overall matrix query complexity with respect to the error should be carefully analyzed on a case-by-case basis.

The contour integral method, QEVT with history state polynomial, and Lap-LCHS each have different ranges of applicability where they are most efficient, so it is difficult to provide a succinct general comparison of these methods. As a rule of thumb, the contour integral method may require a part of the contour to be in the left half of the complex plane, so the cost \textit{can} grow exponentially in $T$ for $h(z)$ of the form in \cref{eqn:laplace_forward}.
According to \cite[Proposition 34]{LowSu2024}, QEVT also requires the Hermitian part $L$ of the matrix $A$ to be positive semi-definite, just as Lap-LCHS does. However, it may be possible to implement a broader range of transformations using QEVT,  as it only requires the complex analyticity of the function 
$h(z)$ over a bounded domain and does not rely on any assumptions regarding the existence of its inverse Laplace transform.
Both Lap-LCHS and QEVT can achieve near-optimal complexity with respect to the number of queries to $A$, but Lap-LCHS has lower state preparation cost. 
Specifically, Lap-LCHS achieves the optimal state preparation cost, while the cost of QEVT involves additional logarithmic overhead in terms of the polynomial degree, even when combined with the recently developed tunable variable time amplitude amplification (VTAA) and block preconditioning techniques \cite{LowSu2024quantumlinearalgorithmoptimal}. 
Lap-LCHS is also simpler and might be more feasible in the early fault-tolerant regime by implementing the linear combination stochastically~\cite{WangMcArdleBerta2024,Chakraborty2024}, while it remains unclear how to implement QEVT with near-optimal scaling on intermediate-term quantum computers due to the complexity of VTAA.

To demonstrate the computational advantage of Lap-LCHS over other methods for performing eigenvalue transformations, we consider several specific instances. 
For all problems considered in this work, we find that Lap-LCHS achieves the best known complexity both in terms of the number of queries to $A$ and the initial state preparation oracle. 

The first example is the matrix function 
\begin{equation}
h(z)=\frac{1}{ z}(1-e^{-z T })=\int_0^T e^{-zs} \ud s,
\end{equation} 
so that $g(s)$ is an indicator function on the interval $[0,T]$. This corresponds to the simulation of a special inhomogeneous differential equation 
\begin{equation}\label{eqn:firstorder_inhom}
\frac{\ud u(t)}{\ud t} = -A u(t) + \ket{\psi}, \quad u(0)=0, \quad 0\le t\le T. 
\end{equation}
Generalizing the homogeneous equation in~\cref{eqn:firstorder_hom}, the inhomogeneous equation can model an external driving force. 
We apply the Lap-LCHS algorithm and find that in this case it is equivalent to combining the standard LCHS and variation of constants.
This example is presented in \cref{sec:linear_inhom}.

The second example is the power of matrix inverses  
\begin{equation}
   h(z) = (\eta + z)^{-p}
\end{equation}
for some $\eta>0,p>0$.
This transformation is a generalization of the standard matrix inverse, and the power $p$ does not need to be an integer. Fractional powers of matrices arise in certain Markov chain models and fractional differential equations \cite{HighamLin2011}. 
Here the matrix function $h(A) = (\eta I + A)^{-p}$ involves a shift by $\eta$. Notice that we can implement $A^{-p}$ if all the eigenvalues of $A+A^{\dagger}$ are positive. Then we can choose $\eta$ to be the smallest eigenvalue of $(A+A^{\dagger})/2$ and consider the matrix function $h(A-\eta I)$. 
This example is presented in \cref{sec:power_inverse}. 

The third example is the simulation of a linear differential equation with a non-diagonal mass matrix:
\begin{equation}\label{eqn:firstorder_mass}
A\frac{\ud u(t)}{\ud t} = - u(t), \quad 0\le t\le T.
\end{equation}
We consider two types of initial conditions, namely 
\begin{equation}
    u(0) = A^{-1} u_0, \quad \mbox{and}\quad u(0) = u_0.
\end{equation}
In both cases, we are given an oracle for preparing the state $\ket{u_0} = u_0/\norm{u_0}$. 
\Cref{eqn:firstorder_mass} has applications to evolutionary partial differential equations with time-space mixed derivatives in the form of 
\begin{equation}
    \partial_t \mathcal{L}_x u(t,x) = - u(t,x), 
\end{equation}
where $\mathcal{L}_x = \sum_{k=0}^{m} a_k \partial_x^k$ is a differential operator with respect to the spatial variable $x$. 
We can discretize the spatial variable to obtain a semi-discretized partial differential equation in the form of~\cref{eqn:firstorder_mass}, where $A$ is the discrete version of $\mathcal{L}_x$. 

We apply the Lap-LCHS algorithm to~\cref{eqn:firstorder_mass} and analyze its complexity in~\cref{sec:firstorder_mass}. 
Remarkably, Lap-LCHS can solve~\cref{eqn:firstorder_mass} (i.e., implement the transformation $e^{-T A^{-1}}$) without explicitly inverting the matrix $A$, and achieves the best known complexity both in terms of the number of queries to $A$ and the initial state preparation oracle. 

The last example is the second-order differential equation
\begin{equation}
\frac{\ud^2 u}{\ud t^2}=Au, \quad u(0) = \ket{u_0}, \quad u'(0)=v_0,
\end{equation}
which has many applications such as wave equations. 
These second-order differential equations have a branch of the solution that grows in time. 
We apply Lap-LCHS to simulate the \textit{decaying} branch of the solution. Note that if we first convert the second-order differential equation into a set of first-order differential equations and apply the matrix exponential solver, the cost will be dominated by the growing solution. 
Similarly to the first-order case, we may also consider the second-order differential equation with a non-diagonal mass matrix
\begin{equation}
A\frac{\ud^2 u}{\ud t^2}=u, \quad u(0) = \ket{u_0}, \quad u'(0)=v_0.
\end{equation}
The results are presented in \cref{sec:secondorder}.

\subsection{Discussion}
 
Despite significant recent advances in quantum algorithms for linear algebraic transformations, research on eigenvalue transformations for non-normal matrices remains in a relatively early stage. Consider for example Hamiltonian simulation, which is obviously an eigenvalue transformation problem. However, an efficient implementation of $e^{-iHt}$ using the contour integral method or the history state polynomial method is far from obvious, while Lap-LCHS simply reduces to the Hamiltonian simulation problem itself by construction. To our knowledge, when $-A$ satisfies the dissipative condition, Lap-LCHS obtains the lowest  complexity among existing approaches for eigenvalue transformations. However, a central unifying structure akin to the qubitization framework that underpins QSVT and prior developments in singular value transformations has yet to be clearly identified for the eigenvalue transformation of non-normal matrices. Establishing such a unifying perspective could greatly enhance our understanding of quantum algorithms for matrix computation.

\subsection{Organization}

The rest of this paper is organized as follows. 
First, in~\cref{sec:prelim}, we review two existing approaches for special eigenvalue transformations: QSVT for Hermitian matrix functions and LCHS for matrix exponentials. 
In~\cref{sec:Lap-LCHS}, we establish the framework of Lap-LCHS, discuss the construction of our quantum algorithm, and present its complexity analysis. 
Then we discuss the applications of the Lap-LCHS algorithm in~\cref{sec:applications}. 

\subsection*{Acknowledgements}

DA acknowledges support from the Department of Defense through the Hartree Postdoctoral Fellowship at QuICS. 
AMC acknowledges support from the Department of Energy, Office of Science, Office of Advanced Scientific Computing Research, Accelerated Research in Quantum Computing program, and from the National Science Foundation (NSF) through QLCI grant OMA-2120757.
LL is a Simons investigator in Mathematics, and acknowledges support from the Challenge Institute for Quantum Computation (CIQC) funded by NSF through grant number OMA-2016245.
LY acknowledges support from NSF grant DMS-2208163. LL and LY also acknowledge support from the U.S. Department of Energy, Office of Science, Accelerated Research in Quantum Computing Centers, Quantum Utility through Advanced Computational Quantum Algorithms, grant no. DE-SC0025572.

\section{Preliminaries}\label{sec:prelim}

\subsection{Singular value transformation and eigenvalue transformation of Hermitian matrices}

The quantum singular value transformation (QSVT) is a powerful tool for designing quantum algorithms.
Given a square matrix $A\in \CC^{N\times N}$ with operator norm $\norm{A}\le 1$, its singular value decomposition can be expressed as $A=W\Sigma V^{\dag}$. Here $W,V$ are $N\times N$ unitary matrices, and $\Sigma$ is a diagonal matrix with diagonal entries in $[0,1]$. QSVT enables an efficient implementation of the singular value transformation for a broad class of polynomial $h:[-1,1]\to \CC$ of definite parity (i.e., $h$ is even or odd), expressed as
\begin{equation}
A\mapsto h^{\mathrm{SV}}(A):=\begin{cases}
W h(\Sigma)V^{\dag}, & \text{if } h \text{ is odd},\\
V h(\Sigma)V^{\dag}, & \text{if } h \text{ is even}.
\end{cases}
\end{equation}
This technique has unified a diverse range of tasks.

When $A$ is a Hermitian matrix and $A\succeq 0$, its eigenvalue decomposition and singular value decomposition coincide, and so do eigenvalue and singular value transformations of $A$. 

When $A$ is an indefinite Hermitian matrix, its eigenvalue decomposition is $A=\mc{V}D \mc{V}^{\dag}$ for some unitary matrix $\mc{V}$.  Its singular value decomposition can be written as $A=W\Sigma V^{\dag}$ with $W=\mc{V}\operatorname{sign}(D), V=\mc{V}$, and $\Sigma=\abs{D}$ is obtained by taking the absolute values of the diagonal elements in $D$. In this case, if $f$ is an odd function, then
\begin{equation}
f^{\mathrm{SV}}(A)=Wf(\Sigma)V^{\dag}=V f(\operatorname{sign}(D)\Sigma)V^{\dag}=\mc{V}f(D)\mc{V}^{\dag}=f(A).
\end{equation}
If $f$ is an even function, then
\begin{equation}
f^{\mathrm{SV}}(A)=Vf(\Sigma)V^{\dag}=\mc{V}f(D)\mc{V}^{\dag}=f(A).
\end{equation}
Therefore, as long as $f$ has definite parity, eigenvalue and singular value transformations of a Hermitian matrix $A$ are the same.

For a more general $A$, there is no longer a clear connection between eigenvalue transformations and singular value transformations.

\subsection{Laplace transform and Hardy functions}\label{sec:laplace_hardy}
 
The Laplace transform is closely connected to a class of functions called the Hardy space. 
 In the upper half-plane $\CC^+=\{z\in\CC \mid \Im z>0\}$, a function $f(z)$ belongs to the Hardy space $H^p(\CC^+)$ ($1\le p<\infty$) if $f(z)$ is holomorphic in $\CC^+$ and satisfies
 \begin{equation}
 \sup_{y > 0} \int_{-\infty}^{\infty} |f(x + iy)|^p \ud x < \infty.
 \end{equation}
 The Hardy space $H^\infty(\CC^+)$ is given by bounded holomorphic functions in $\CC^+$, i.e., 
 \begin{equation}
 \sup_{z \in \CC^+} |f(z)| < \infty.
 \end{equation}
 For any function $f\in H^p(\CC^+)$ with $1\le p<\infty$, the boundary value on the real axis can be obtained by taking the non-tangential limit as $z$ approaches the real axis~\cite[VI.C]{Koosis1998}. The limiting function, denoted by $\wt{f}(x)$, satisfies $\wt{f}\in L^p(\RR;\CC)$, and the Hardy function $f(z)$ can be reconstructed from the boundary value. 
 From the Hardy space $H^p(\CC^+)$ we may construct Hardy spaces restricted to other regions of the complex plane via conformal mappings. We introduce the notation of the upper and lower half-plane
 \begin{equation}
 \CC^+=\{z\in\CC \mid \Im z>0\}, \quad \CC^{-}=\{z\in\CC \mid \Im z<0\}
 \end{equation}
 as well as the left and right half-plane
 \begin{equation}
 \CC^{>}=\{z\in\CC \mid \Re z>0\}, \quad \CC^{<}=\{z\in\CC \mid \Re z<0\},
 \end{equation}
 respectively.
 Clearly $f(z)\in H^p(\CC^{>})$ if $f(-i z)\in  H^p(\CC^{+})$ and so on.

 Let $h(z)$ be the Laplace transform of ${g}(t)$ according to \cref{eqn:laplace_forward}.
 Since this integral will be used to construct an LCU representation, we assume
 ${g}\in L^1(\RR_+)$. Then 
 \begin{equation}
 \abs{h(z)}\le \int_{0}^{\infty} \abs{{g}(t)} \ud t =: \norm{{g}}_{L^1(\RR_+)}, \quad \Re z\ge 0. 
 \end{equation}
 Furthermore, $h(z)$ is a holomorphic function on the right half-plane $C_>$, and is continuous on $\overline{C_>}$. Therefore, the function $h(z)$ of interest in this work is in the Hardy space $H^\infty(C_>)$.

\subsection{Linear combination of Hamiltonian simulation for matrix exponentials}

The Cartesian decomposition of $A\in\CC^{N\times N}$ is~\cite[Chapter I]{Bhatia1997}
\begin{equation}\label{eqn:A_cartesian_1}
A = L + iH,
\end{equation}
where the Hermitian matrices
\begin{equation}\label{eqn:A_cartesian_2}
L = \frac{A+A^{\dagger}}{2}, \quad H = \frac{A-A^{\dagger}}{2i}
\end{equation}
are called the matrix real and imaginary part of $A$, respectively. If $L \succeq 0$, then $-A$ is called dissipative. In particular, if $A v=\lambda v$, then 
\begin{equation}
\frac12 (v^{\dag} Av+v^{\dag} A^{\dag} v)=\Re \lambda \norm{v}^2=v^{\dag} L v\ge 0.
\end{equation}
Therefore, $\Re \lambda \ge 0$ for any eigenvalue $\lambda$ of the matrix $A$.

The recent generalization of the LCHS formula expresses $e^{-A}$ as an integral of Hamiltonian simulation problems weighted by a kernel. 
The following result is a restatement of \cite[Theorem 5]{AnChildsLin2023} in terms of Hardy functions. 
\begin{thm}\label{thm:LCHS_improved}
    Let $f(z)$ be a function of $z \in \mathbb{C}$, such that 
    \begin{enumerate} 
        \item (Hardy) $f(z)\in H^1(\CC_{-})$ is continuous on $\overline{\CC_{-}}$, 
        \item (Normalization) $\int_{\mathbb{R}} \frac{f(k)}{ 1-ik} \ud k = 1$. 
    \end{enumerate}
    Consider the Cartesian decomposition of $A$ given in \cref{eqn:A_cartesian_1} and suppose $L\succeq 0$. Then for $t \geq 0$, 
    \begin{equation}
    e^{-At} = \int_{\mathbb{R}} \frac{f(k)}{ 1-ik} e^{-i t (kL+H) } \ud k. 
\end{equation}
\end{thm}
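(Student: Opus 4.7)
My plan is to verify the identity via uniqueness of solutions to the linear initial-value problem $\partial_t W = -A W$, $W(0)=I$, for which $e^{-At}$ is the unique solution. Define the operator-valued function
\begin{equation*}
U(t) := \int_{\RR} \frac{f(k)}{1-ik}\, e^{-it(kL+H)} \ud k.
\end{equation*}
The normalization hypothesis yields $U(0) = I$ immediately, so it suffices to establish $U'(t) = -A\, U(t)$ for all $t\ge 0$.

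Differentiating under the integral gives $U'(t) = -i \int \frac{f(k)}{1-ik}\,(kL+H)\, e^{-it(kL+H)} \ud k$. The key algebraic observation, valid for every $k\in\CC$, is the $k$-dependent splitting
\begin{equation*}
A = L + iH = i\,(kL+H) + (1-ik)\,L.
\end{equation*}
Multiplying this identity on the right by $e^{-it(kL+H)}$, integrating against the weight $f(k)/(1-ik)$, and cancelling the factor $1-ik$ where it appears, I obtain
\begin{equation*}
A\, U(t) = -U'(t) + L\cdot J(t), \qquad J(t) := \int_{\RR} f(k)\, e^{-it(kL+H)} \ud k.
\end{equation*}
Thus the desired ODE $U'(t) = -A U(t)$ is equivalent to the vanishing claim $J(t) = 0$ for all $t\ge 0$.

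The main obstacle, and the step where both hypotheses $f\in H^1(\CC_-)$ and $L\succeq 0$ are essential, is the proof that $J(t)=0$. Fixing arbitrary vectors $\phi,\psi$, I consider the scalar function $h_{\phi,\psi}(k) := f(k)\,\langle\phi|\,e^{-it(kL+H)}\,|\psi\rangle$. Since $e^{-it(kL+H)}$ is entire in $k$ as an operator-valued function, $h_{\phi,\psi}$ inherits holomorphy on $\CC_-$ from $f$. To control its size in the closed lower half-plane, observe that at $k=x+iy$ with $y\le 0$ the Hermitian part of $-it(kL+H)$ equals $tyL$, which is negative semi-definite because $L\succeq 0$ and $ty\le 0$; the logarithmic-norm estimate then yields $\|e^{-it(kL+H)}\|\le 1$ throughout $\overline{\CC_-}$. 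Consequently $|h_{\phi,\psi}(k)|\le \|\phi\|\|\psi\|\,|f(k)|$ in the closed lower half-plane, and combined with $f\in H^1(\CC_-)$ this places $h_{\phi,\psi}$ in $H^1(\CC_-)$. A standard Paley--Wiener / contour-shift argument then gives $\int_\RR h_{\phi,\psi}(k)\,\ud k = 0$: the Fourier transform of an $H^1(\CC_-)$ function is continuous and supported on $(-\infty,0]$, so its value at $s=0$ -- which is precisely this integral -- must vanish. Since $\phi,\psi$ were arbitrary, $J(t)=0$ as an operator, completing the derivation.
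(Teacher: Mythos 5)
Your proposal is correct. Note that this paper does not actually prove \cref{thm:LCHS_improved} itself --- it imports the statement as a restatement of \cite[Theorem 5]{AnChildsLin2023} --- so there is no in-paper proof to compare against; your argument is, however, very much in the spirit of the original derivation, which also characterizes both sides by the ODE $U'=-AU$, $U(0)=I$, and reduces everything to the vanishing of the residual term $\int_{\RR} f(k)\,e^{-it(kL+H)}\ud k$ via holomorphy and boundedness of the integrand in the closed lower half-plane (which is exactly where $L\succeq 0$, $t\ge 0$ and $f\in H^1(\CC_-)$ enter). The individual steps all check out: the splitting $A=i(kL+H)+(1-ik)L$ is an identity; the logarithmic-norm bound $\norm{e^{-it(kL+H)}}\le 1$ for $\Im k\le 0$ is correct since the Hermitian part of $-it(kL+H)$ is $t(\Im k)L\preceq 0$; and the matrix-element functions $h_{\phi,\psi}$ therefore lie in $H^1(\CC_-)$, so their boundary integrals vanish. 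Two places deserve an explicit sentence in a polished write-up: (i) differentiation under the integral sign is justified because $\abs{k/(1-ik)}\le 1$ and the boundary restriction $f|_{\RR}$ is in $L^1(\RR)$ (by Fatou together with continuity up to $\overline{\CC_-}$), giving a $t$-uniform integrable dominating function $\abs{f(k)}(\norm{L}+\norm{H})$; and (ii) in the Paley--Wiener step, the value of the Fourier transform at the endpoint $s=0$ vanishes not by the support statement alone but because the transform of an $L^1$ boundary function is continuous and vanishes on the adjacent open half-line --- you say this, but tersely, and it is the crux of $J(t)=0$. Whether one phrases that step as you do or as an explicit contour deformation onto large semicircles in the lower half-plane is a matter of taste; both hinge on the same two hypotheses.
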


An asymptotically near-optimal choice of the kernel function is 
\begin{equation}\label{eqn:kernel_intro}
    f(z) = \frac{1}{2\pi e^{-2^\beta} e^{(1+iz)^{\beta}} }, \quad \beta \in (0,1).
\end{equation}
This function decays at a near-exponential rate of $e^{-c |k|^{\beta}}$ on the real axis. Furthermore, $\beta$ can be arbitrarily close to (but not equal to) $1$~\cite[Proposition 7]{AnChildsLin2023}. 

\section{Laplace transform based linear combination of Hamiltonian simulations}\label{sec:Lap-LCHS}

In this section, we present the Lap-LCHS method for quantum eigenvalue transformations. 
We first show the theoretical foundation of Lap-LCHS for representing quantum eigenvalue transformations in~\cref{sec:Lap-LCHS_formulation}. 
The corresponding quantum algorithm is constructed from the discretized version of the LCHS representation; the numerical discretization is discussed in~\cref{sec:Lap-LCHS_quadrature}. 
Then, we list the oracles we assume for the quantum algorithm in~\cref{sec:Lap-LCHS_oracles}, walk through the steps of the quantum algorithm in~\cref{sec:Lap-LCHS_algorthm}, and estiamte its complexity in~\cref{sec:Lap-LCHS_complexity}.

\subsection{Formulation}\label{sec:Lap-LCHS_formulation}

\cref{thm:LCHS_improved} shows that the matrix exponential function $h(z)=e^{-z}$ has an LCHS representation. 
For a more general $h(z)$, we can apply a matrix version of the Laplace transform given in~\cref{eqn:matrix_laplace} and represent this transforrm as a linear combination of matrix exponential functions, which in turn can be interpreted as a linear combination of Hamiltonian simulation problems. 
Specifically, we have the following theorem. 

\begin{thm}\label{thm:LCHS_hardy}
    Consider functions $f(z), {g}(t)$ satisfying 
    \begin{enumerate} 
        \item $f\in H^1(\CC_{-})$ is continuous on $\overline{\CC_{-}}$, 
        \item $\int_{\mathbb{R}} \frac{f(k)}{ 1-ik} \ud k = 1$, and
        \item ${g}\in L^1(\RR_+)$.
        \end{enumerate}
Define $h(z)$ according to the Laplace transform in \cref{eqn:laplace_forward}.
Consider the Cartesian decomposition of $A$ given in \cref{eqn:A_cartesian_1} and suppose $L\succeq 0$. Then 
\begin{equation}\label{eqn:lchs_matrixhardy}
    h(A) = \int_0^{\infty} \int_{\mathbb{R}} \frac{f(k){g}(t)}{ 1-ik} e^{-it (kL+H)} \ud k \ud t. 
\end{equation}
\end{thm}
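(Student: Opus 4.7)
The plan is to derive the double-integral identity in two steps: first convert $h(A)$ into a single integral of matrix exponentials using the matrix Laplace transform \eqref{eqn:matrix_laplace}, then replace each matrix exponential via the LCHS identity of \cref{thm:LCHS_improved} and interchange the order of integration by Fubini. Both ingredients are already available, so the content is really in checking that the hypotheses line up and that the integrals converge in operator norm.

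For the first step I would argue as follows. By the diagonalizability assumption stated in the introduction, write $A = \mc{V} D \mc{V}^{-1}$. Because $L \succeq 0$, every eigenvalue $\lambda$ of $A$ satisfies $\Re \lambda \ge 0$, so the scalar identity $h(\lambda) = \int_0^\infty {g}(t) e^{-\lambda t}\ud t$ from \cref{eqn:laplace_forward} is valid on the spectrum of $A$. Applying this eigenvalue-wise to $h(D)$ and conjugating by $\mc{V}$ yields
\begin{equation}
h(A) = \mc{V} h(D) \mc{V}^{-1} = \int_0^\infty {g}(t)\, \mc{V} e^{-Dt} \mc{V}^{-1} \ud t = \int_0^\infty {g}(t)\, e^{-A t} \ud t,
\end{equation}
exactly as in \cref{eqn:matrix_laplace}. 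The operator-valued integral converges absolutely, since ${g}\in L^1(\RR_+)$ and the dissipativity of $-A$ (a standard energy estimate using $L\succeq 0$) gives $\norm{e^{-At}}\le \kappa(\mc{V})$ uniformly in $t\ge 0$.

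For the second step I would substitute the LCHS representation of \cref{thm:LCHS_improved}, which applies because $f$ and $A$ satisfy exactly the hypotheses of that theorem, to obtain
\begin{equation}
h(A) = \int_0^\infty {g}(t) \left( \int_{\RR} \frac{f(k)}{1-ik}\, e^{-it(kL+H)} \ud k \right) \ud t,
\end{equation}
and then swap the two integrals. The key estimate for Fubini is that $kL+H$ is Hermitian for real $k$, so $e^{-it(kL+H)}$ is unitary and has operator norm exactly $1$. Combined with $|1-ik|\ge 1$ on the real line, this bounds the integrand in operator norm by $|{g}(t)||f(k)|$, which is integrable on $\RR_+\times \RR$ because ${g}\in L^1(\RR_+)$ and the boundary value of $f\in H^1(\CC_-)$ lies in $L^1(\RR)$. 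Fubini then allows the order of integration to be exchanged, producing the claimed identity \eqref{eqn:lchs_matrixhardy}.

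The main obstacle, such as it is, lies in this interchange step and, in particular, in ensuring that the single-integral representation $h(A) = \int_0^\infty {g}(t) e^{-At}\ud t$ is genuinely an identity of operators rather than only of scalars on the spectrum; the rest of the argument is an essentially mechanical application of prior results. Once the norm bound on $e^{-At}$ and the $L^1$ bounds on ${g}$ and on the boundary trace of $f$ are recorded, absolute convergence of the double integral and Fubini finish the proof with no additional ideas.
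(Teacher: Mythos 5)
Your proposal is correct and follows essentially the same route as the paper: reduce $h(A)$ to the matrix Laplace transform $\int_0^\infty g(t)e^{-At}\,\ud t$ and apply \cref{thm:LCHS_improved} to each exponential $e^{-At}$. The only substantive differences are that the paper additionally invokes a continuity-of-matrix-functions fact (Horn--Johnson, Theorem 6.2.27) to justify restricting to diagonalizable $A$ even though the theorem statement does not assume diagonalizability, which you sidestep by simply assuming it, and that your Fubini interchange is not actually needed since the claimed identity \eqref{eqn:lchs_matrixhardy} already has the $k$-integral innermost (your norm bounds are still the right ones for absolute convergence).
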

\begin{proof}
By \cite[Theorem 6.2.27]{HornJohnson1991}, two continuous matrix functions are identical if and only if they are identical for all diagonalizable matrices in the domain.  Assuming $A=VDV^{-1}$ is diagonalizable, $h(A)=Vh(D)V^{-1}$ is well-defined. 
By \cref{thm:LCHS_improved}, the right-hand side is equal to
\begin{equation}
\int_0^{\infty} {g}(t) e^{-tA} \ud t=  V\left(\int_0^{\infty} {g}(t) e^{-tD} \ud t\right)V^{-1}=Vh(D)V^{-1}.
\end{equation}
This proves the theorem.
\end{proof}

In \cref{thm:LCHS_hardy}, we assume the inverse Laplace transform $g(t)$ to be in $L^1(\RR_+)$. Notice that validity of~\cref{eqn:lchs_matrixhardy} does not need $g \in L^1(\RR_+)$, but we will construct our quantum algorithms by discretizing~\cref{eqn:lchs_matrixhardy} and applying the quantum LCU subroutine, which requires $g \in L^1(\RR_+)$. Nevertheless, there are still some cases where $g(t)$ satisfies~\cref{eqn:lchs_matrixhardy}, is not in $L^1(\RR_+)$, and allows an efficient quantum implementation. 
An example is 
\begin{equation}
    g(t) = \sum_{j} \delta_{t_j}(t) + \widetilde{g}(t), 
\end{equation}
where $\delta_{t_j}(t)$ is the Dirac delta function at $t_j \geq 0$ and $\widetilde{g}(t)$ is in $L^1(\RR_+)$. 
In this case, by~\cref{eqn:lchs_matrixhardy}, the corresponding matrix function $h(A)$ can be written as 
\begin{equation}
    h(A) = \sum_j \int_{\mathbb{R}} \frac{f(k)}{ 1-ik} e^{-i t_j (kL+H)} \ud k + \int_0^{\infty} \int_{\mathbb{R}} \frac{f(k)\widetilde{g}(t)}{ 1-ik} e^{-it (kL+H)} \ud k \ud t. 
\end{equation}
Each integral in the summation is just the matrix exponential $e^{-A t_j}$ according to~\cref{thm:LCHS_improved} and can be efficiently implemented by the standard LCHS algorithm~\cite{AnChildsLin2023}. 
However, for technical simplicity, in this work we mainly focus on the case where the inverse Laplace transform is assumed to be in $L^1(\RR_+)$. 

Based on~\cref{thm:LCHS_hardy}, we may design an algorithm that implements a block encoding of $h(A)$ and prepare a quantum state proportional to $h(A)\ket{\psi}$. 
We first discretize the integral~\cref{eqn:lchs_matrixhardy} to approximate $h(A)$ by a discrete linear combination of multiple Hamiltonian simulation problems. 
The algorithm then implements each Hamiltonian simulation time-evolution operator by QSVT for optimal asymptotic scaling and combines the simulations using the LCU method.

\subsection{Discretization of the integral}\label{sec:Lap-LCHS_quadrature}

We first truncate~\cref{eqn:lchs_matrixhardy} into a finite domain $[-K,K]\times [0,T]$, and then use the simplest Riemann sum to approximate the integral. 
Specifically, let 
\begin{equation}
    U(k,t) = e^{-it(kL+H)}. 
\end{equation}
We use the approximations
\begin{align}
    h(A) &= \int_0^{\infty} \int_{\mathbb{R}} \frac{f(k){g}(t)}{ 1-ik} U(k,t) \ud k \ud t \\
    & \approx \int_0^{T} \int_{-K}^K \frac{f(k){g}(t)}{ 1-ik} U(k,t) \ud k \ud t \\
    & \approx \sum_{j=0}^{M_k-1}\sum_{l=0}^{M_t-1} c_j \hat{c}_l U(k_j,t_l). \label{eqn:quadrature}
\end{align}
Here $M_k$ and $M_t$ are the numbers of grid points for discretizing the variables $k$ and $t$, respectively. 
The corresponding step sizes are $h_k = 2K/M_k$ and $h_t = T/M_t$. 
The nodes are chosen to be $k_j = -K+jh_j$ and $t_l = lh_t$, and the coefficients are $c_j = h_k \frac{f(k_j)}{1-ik_j}$ and $\hat{c}_l = h_t {g}(t_l)$. 
We bound the approximation errors in the following result, whose proof can be found in~\cref{app:quadrature_error}. 

\begin{lem}\label{lem:quadrature}
    Suppose that $g(t) \in C^1(\RR_+) \cap L^1(\RR_+)$. Then
    \begin{enumerate}
        \item The truncation error can be bounded as 
        \begin{align}
            &\norm{ h(A) - \int_0^{T} \int_{-K}^K \frac{f(k){g}(t)}{ 1-ik} U(k,t) \ud k \ud t } \nonumber\\
            &\quad \leq \norm{f}_{L^1(\mathbb{R})} \norm{{g}}_{L^1((T,\infty))} + \norm{f}_{L^1(\mathbb{R}\setminus[-K.K])} \norm{{g}}_{L^1(\RR_+)}. 
        \end{align}
        \item The quadrature error can be bounded as  
        \begin{align}
            & \norm{ \int_0^{T} \int_{-K}^K \frac{f(k){g}(t)}{ 1-ik} U(k,t) \ud k \ud t - \sum_{j=0}^{M_k-1}\sum_{l=0}^{M_t-1} c_j \hat{c}_l U(k_j,t_l) } \nonumber\\
            &\quad \leq 2 K T h_k  \left( \norm{f}_{L^{\infty}(\mathbb{R})} \norm{t {g}}_{L^{\infty}(\RR_+)} \norm{A} + \left( \norm{f'}_{L^{\infty}(\mathbb{R})} + \norm{f}_{L^{\infty}(\mathbb{R})} \right) \norm{{g}}_{L^{\infty}(\RR_+)} \right) \nonumber\\
            &\qquad + 2 K T h_t  \left( 2\norm{f}_{L^{\infty}(\mathbb{R})} \norm{{g}}_{L^{\infty}(\RR_+)} \norm{A} + \norm{f}_{L^{\infty}(\mathbb{R})} \norm{{g}'}_{L^{\infty}(\RR_+)} \right). 
        \end{align}
        \item In order to bound the overall approximation error $\norm{h(A) - \sum_{j,l}c_j \hat{c}_l U(k_j,t_l)}$ by $\epsilon$, it suffices to choose $K$ and $T$ such that $\norm{{g}}_{L^1((T,\infty))} = \Or(\epsilon/\norm{f}_{L^1(\mathbb{R})})$ and $\norm{f}_{L^1(\mathbb{R}\setminus[-K.K])}  = \Or(\epsilon/\norm{{g}}_{L^1(\RR_+)})$, and the numbers of grid points to be 
        \begin{equation}
            M_k = \Or\left( \frac{K^2T}{\epsilon} \left( \norm{f}_{L^{\infty}(\mathbb{R})} \norm{t {g}}_{L^{\infty}(\RR_+)} \norm{A} + \left( \norm{f'}_{L^{\infty}(\mathbb{R})} + \norm{f}_{L^{\infty}(\mathbb{R})} \right) \norm{{g}}_{L^{\infty}(\RR_+)} \right) \right)
        \end{equation}
        and 
        \begin{equation}
            M_t = \Or\left( \frac{KT^2}{\epsilon}\left( \norm{f}_{L^{\infty}(\mathbb{R})} \norm{{g}}_{L^{\infty}(\RR_+)} \norm{A} + \norm{f}_{L^{\infty}(\mathbb{R})} \norm{{g}'}_{L^{\infty}(\RR_+)} \right)  \right).  
        \end{equation}
    \end{enumerate}
\end{lem}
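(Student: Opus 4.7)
The three parts are best proved sequentially: parts 1 and 2 are independent error estimates on the two approximations, and part 3 is bookkeeping. Throughout I would use that $\norm{U(k,t)}=1$ (since $kL+H$ is Hermitian) and $|1-ik|^{-1}\le 1$ on the real line. For part 1 (truncation), I would decompose the difference between the full and truncated integrals as the sum
\[
\int_{0}^{\infty}\!\int_{|k|>K}\frac{f(k)g(t)}{1-ik}U(k,t)\,\ud k\,\ud t \;+\; \int_{T}^{\infty}\!\int_{-K}^{K}\frac{f(k)g(t)}{1-ik}U(k,t)\,\ud k\,\ud t,
\]
observe that the operator norm of the integrand is pointwise bounded by $|f(k)|\,|g(t)|$, and apply Fubini together with the trivial bound $\norm{f}_{L^1([-K,K])}\le\norm{f}_{L^1(\RR)}$ to obtain the two products of $L^1$-norms claimed.

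For part 2 (quadrature), let $\Phi(k,t):=\frac{f(k)g(t)}{1-ik}U(k,t)$. On each cell $[k_j,k_{j+1}]\times[t_l,t_{l+1}]$, the $C^1$ hypothesis on $g$ (and the smoothness of $f$ and $U$) lets me use the telescoping identity
\[
\Phi(k,t)-\Phi(k_j,t_l) \;=\; \int_{k_j}^{k}\partial_k\Phi(k',t)\,\ud k' + \int_{t_l}^{t}\partial_t\Phi(k_j,t')\,\ud t',
\]
and then bound the per-cell error by $h_k h_t\bigl(h_k\norm{\partial_k\Phi}_\infty+h_t\norm{\partial_t\Phi}_\infty\bigr)$. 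Summing over the $2KT/(h_k h_t)$ cells produces the leading prefactor $2KT$. To bound the $\partial_k$-type terms I would invoke Duhamel's formula, $\partial_k U = -it\int_0^1 e^{-its(kL+H)} L\, e^{-it(1-s)(kL+H)}\,\ud s$, giving $\norm{\partial_k U}\le t\norm{L}\le t\norm{A}$, and combine it with the elementary estimate $\bigl|\tfrac{\ud}{\ud k}\tfrac{f(k)}{1-ik}\bigr|\le |f'(k)|+|f(k)|$ (using $|1-ik|\ge 1$); this immediately yields the $h_k$ coefficient claimed.

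The $h_t$ coefficient is the delicate one, and I expect this to be the main obstacle. Using $\partial_t U = -i(kL+H)U$ together with the naive bound $|k|\norm{L}+\norm{H}$ would contribute a factor of $K\norm{A}$ and inflate the error estimate to order $K^2 T$ rather than $KT$. The fix is to rearrange
\[
\partial_t\Phi = \frac{f(k)g'(t)}{1-ik}\,U \;-\; i\,\frac{k f(k)g(t)}{1-ik}\,LU \;-\; i\,\frac{f(k)g(t)}{1-ik}\,HU
\]
and exploit $|k/(1-ik)|=|k|/\sqrt{1+k^2}\le 1$, which replaces the dangerous $|k|\norm{L}$ factor by $\norm{L}\le\norm{A}$ and recovers the announced clean $K$-linear bound with coefficient $2\norm{f}_\infty\norm{g}_\infty\norm{A}+\norm{f}_\infty\norm{g'}_\infty$.

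For part 3, I would simply require each of the four error contributions to be $\Or(\epsilon)$. The two truncation constraints in part 1 directly fix $K$ and $T$ as stated; substituting $h_k=2K/M_k$ and $h_t=T/M_t$ into the quadrature bound from part 2 and solving for the grid sizes produces the two formulas for $M_k$ and $M_t$.
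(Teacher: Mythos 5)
Your proposal is correct and follows essentially the same route as the paper's proof: the same two-region splitting of the truncation error with $|1-ik|^{-1}\le 1$, the same per-cell first-order estimate summed over the $2KT/(h_k h_t)$ cells, and the same key observation that $|k/(1-ik)|\le 1$ keeps the $\partial_t$ contribution at $2\norm{f}_\infty\norm{g}_\infty\norm{A}$ rather than picking up a factor of $K$. Your use of Duhamel's formula for $\partial_k e^{-it(kL+H)}$ is in fact slightly more careful than the paper's displayed derivative (which is written as if $L$ and $H$ commute), but it yields the identical bound $t\norm{L}\le t\norm{A}$, so the arguments coincide.
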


\cref{lem:quadrature} shows that the quadrature error and the numbers of grid points depend polynomially on the truncation parameters $K$ and $T$ such that the tail integrals of $f(k)$ and $g(t)$, respectively, are sufficiently small. If we use the asymptotically near-optimal kernel function $f(k)$ defined in~\cref{eqn:kernel_intro}, then it suffices to choose $K = \Or\left( \left( \log( \norm{g}_{L^{1}(\RR_+)}/\epsilon ) \right)^{1/\beta} \right)$ for any $\beta \in (0,1)$, which scales only poly-logarithmically in the inverse error. The scaling of $T$, however, depends on the decay rate of the inverse Laplace transform $g(t)$ as $t \rightarrow \infty$ and should be carefully analyzed on a case-by-case basis. For example, in computing the power of matrix inverse (see~\cref{sec:power_inverse}) and solving linear differential equations with mass matrices (see~\cref{sec:firstorder_mass}), the inverse Laplace transforms both decay exponentially in $t$, resulting a poly-logarithmic scaling of $T$ with respect to $1/\epsilon$. 

\subsection{Oracles}\label{sec:Lap-LCHS_oracles}

For a matrix $A$, we typically assume access to a block encoding $O_A$ such that 
\begin{equation}
    \left(\bra{0}\otimes I\right) O_A \left(\ket{0}\otimes I\right) = \frac{A}{\alpha_A}. 
\end{equation}
Here $\alpha_A \geq \|A\|$ is the block encoding factor. 
We may alternatively assume access to block encodings $O_L$ and $O_H$ of the matrix real and imaginary parts of $A$, such that 
\begin{equation}
    \left(\bra{0}\otimes I\right) O_L \left(\ket{0}\otimes I\right) = \frac{L}{\alpha_L}, \quad \left(\bra{0}\otimes I\right) O_H \left(\ket{0}\otimes I\right) = \frac{H}{\alpha_H}. 
\end{equation}
Notice that if we have $O_A$, then according to the equations $L = (A+A^{\dagger})/2$ and $H = (A-A^{\dagger})/(2i)$, we may apply LCU to construct $O_L$ and $O_H$ with $\alpha_L = \alpha_H = \alpha_A$, using one additional ancilla qubit and two queries to $O_A$. 
On the other hand, a block encoding of $A$ can also be constructed from block encodings of $L$ and $H$ by LCU with $\alpha_A = \alpha_L+\alpha_H$ and one query to each of $O_L$ and $O_H$. 

If we are interested in approximating the state $h(A)\ket{\psi}/\norm{h(A)\ket{\psi}}$, we assume a state preparation oracle $O_{\psi}$ for $\ket{\psi}$ such that $O_{\psi}\ket{0} = \ket{\psi}$. 

In our algorithm, we use a few more unitaries to encode information about the quadrature formula~\cref{eqn:quadrature} into quantum states. These unitaries can be constructed with much lower cost than the block-encoding $O_A$ of the matrix $A$ and the state preparation oracle $O_{\psi}$ of the input state, as they are independent of the dimension of the eigenvalue transformation problem. We refer to them as \emph{quadrature unitaries} and discuss their definitions and constructions here. 

For the nodes $k_j$ and $t_l$, we can construct unitaries $O_k$ and $O_t$ to encode them in binary, i.e., 
\begin{equation}
    O_k \ket{j}\ket{0} = \ket{j} \ket{k_j}, \quad O_t \ket{l}\ket{0} = \ket{l} \ket{t_l}. 
\end{equation}
For the coefficients, we construct a pair of state preparation unitaries. 
Specifically, for $c = (c_0,\cdots,c_{M_k-1})$, we assume $(O_{c,l}, O_{c,r})$ such that 
\begin{align}
    O_{c,l} \ket{0} = \frac{1}{\sqrt{\norm{c}_1}} \sum_{j=0}^{M_k-1} \overline{\sqrt{c_j}} \ket{j}, \quad O_{c,r} \ket{0} = \frac{1}{\sqrt{\norm{c}_1}} \sum_{j=0}^{M_k-1} \sqrt{c_j} \ket{j}. 
\end{align}
Here $\overline{z}$ denotes the complex conjugate of $z \in \CC$, and $\sqrt{\cdot}$ denotes the principal branch of the square root. 
Similarly, for $\hat{c} = (\hat{c}_0,\cdots,\hat{c}_{M_t-1})$, we construct $(O_{\hat{c},l}, O_{\hat{c},r})$ such that 
\begin{align}
    O_{\hat{c},l} \ket{0} = \frac{1}{\sqrt{\norm{\hat{c}}_1}} \sum_{l=0}^{M_t-1} \overline{\sqrt{\hat{c}_l}} \ket{l}, \quad O_{\hat{c},r} \ket{0} = \frac{1}{\sqrt{\norm{\hat{c}}_1}} \sum_{l=0}^{M_t-1} \sqrt{\hat{c}_l} \ket{l}. 
\end{align}
Notice that the unitaries $O_{c,l}$, $O_{c,r}$, $O_{\hat{c},l}$, and $O_{\hat{c},r}$ prepare superpositions of $M$ basis states, where $M$ can be polynomial in terms of $K$, $T$, and $\epsilon$ according to~\cref{lem:quadrature}. 
In general, we can prepare an $M$-dimensional quantum state with cost $\Or(M)$ \cite{SBM06}, but this might incur a gate complexity that is polynomial in the inverse error $1/\epsilon$, as $M$ scales polynomially in $1/\epsilon$ for low-order quadrature formulas (as shown in~\cref{lem:quadrature}). However, since the amplitudes of the states are known integrable functions evaluated at discrete points, the state preparation circuits can be constructed more efficiently, in time only $\Or(\poly\log M)$~\cite{GroverRudolph2002,McardleGilyenBerta2022} (see~\cref{app:prepare_oracle} for a detailed discussion). 
Alternatively, we may use a high-order quadrature rule for integral discretization following~\cite{AnChildsLin2023}, such as composite Gaussian quadrature, to improve the scaling of $M$ on $K$, $T$, and $\epsilon$, at the expense of introducing high-order derivative dependence.

\subsection{Algorithm}\label{sec:Lap-LCHS_algorthm}

The basic approach to implementing $h(A)$ is to use LCU to linearly combine Hamiltonian simulation operators $U(k_j,t_l)$. 
We first describe the construction of a select oracle for $U(k,t)$, a prerequisite for LCU.
We can write 
\begin{equation}
    U(k,t) = e^{-it(kL+H)} = e^{ -i T (K\alpha_L + \alpha_H) \widetilde{H} }, \quad \widetilde{H} = \frac{t}{T} \frac{1}{K\alpha_L + \alpha_H} (kL+H). 
\end{equation}
Thus it suffices to construct a coherent block encoding of $\widetilde{H}$ and use it as the input model for the QSVT circuit for the time-evolution operator $e^{ -i T (K\alpha_L + \alpha_H) \widetilde{H} }$. 

Let us start with the state 
\begin{equation}
    \ket{j}\ket{l} \ket{0}_j \ket{0}_l \ket{0}_{R} \ket{0}_{R_k} \ket{0}_{R_t} \ket{0}_a \ket{\psi}. 
\end{equation}
Here $\ket{\psi}$ is the input state in the system register, and
$\ket{j}$ 
and $\ket{l}$ are indices for the $k$ and $t$ variables, respectively. 
We introduce six additional ancilla registers.
Specifically, $\ket{0}_j$ and $\ket{0}_l$ encode in binary the nodes $k_j$ and $t_l$, respectively; $\ket{0}_R$, $\ket{0}_{R_k}$, and $\ket{0}_{R_t}$ are single qubits for rotations; and $\ket{0}_a$ represents the ancilla register of the block encoding. 
We first apply $O_k$ and $O_t$ to compute the nodes, giving
\begin{equation}\label{eqn:alg_sel_H_in}
    \ket{j}\ket{l} \ket{k_j}_j \ket{t_l}_l \ket{0}_R \ket{0}_{R_k} \ket{0}_{R_t} \ket{0}_a \ket{\psi}. 
\end{equation}
Now, we construct a controlled block encoding of $kL+H$ by LCU. 
Applying a controlled rotation $\text{c-R}$ acting as
\begin{equation}
    \text{c-R}: \ket{k}\ket{0} \rightarrow \ket{k} \left( \frac{\sqrt{\alpha_L k}}{\sqrt{\alpha_L |k|+\alpha_H}}\ket{0} + \frac{\sqrt{\alpha_H}}{\sqrt{\alpha_L |k|+\alpha_H}}\ket{1}\right), 
\end{equation}
we obtain 
\begin{equation}
    \ket{j}\ket{l} \ket{k_j}_j \ket{t_l}_l  \left( \frac{\sqrt{\alpha_L k_j}}{\sqrt{\alpha_L |k_j|+\alpha_H}}\ket{0}_R + \frac{\sqrt{\alpha_H}}{\sqrt{\alpha_L |k_j|+\alpha_H}}\ket{1}_R \right) \ket{0}_{R_k} \ket{0}_{R_t} \ket{0}_a \ket{\psi}. 
\end{equation}
Applying the controlled block encodings $\ket{0}_R\bra{0}_R \otimes U_L$ and $\ket{1}_R\bra{1}_R \otimes U_H$ yields 
\begin{align}
    &\quad \ket{j}\ket{l} \ket{k_j}_j \ket{t_l}_l   \frac{\sqrt{\alpha_L k_j}}{\sqrt{\alpha_L |k_j|+\alpha_H}}\ket{0}_R \ket{0}_{R_k} \ket{0}_{R_t} \ket{0}_a \frac{L}{\alpha_L}\ket{\psi} \\
    & + \ket{j}\ket{l} \ket{k_j}_j \ket{t_l}_l   \frac{\sqrt{\alpha_H}}{\sqrt{\alpha_L |k_j|+\alpha_H}}\ket{1}_R \ket{0}_{R_k} \ket{0}_{R_t} \ket{0}_a \frac{H}{\alpha_H}\ket{\psi} + \ket{\perp_a}. 
\end{align}
Here, $\ket{\perp_a}$ is a possibly unnormalized state orthogonal to $\ket{0}_a$ in the ancilla register labeled by $a$. Later on, we will append more subscripts to indicate that it is orthogonal in more ancilla registers. 
Applying $\text{c-}\overline{\text{R}}^{\dagger}$ where $\text{c-}\overline{\text{R}}$ is the controlled rotation gate obtained by taking the conjugates of the coefficients in $\text{c-R}$, we obtain 
\begin{equation}
    \ket{j}\ket{l} \ket{k_j}_j \ket{t_l}_l   \ket{0}_R \ket{0}_{R_k} \ket{0}_{R_t} \ket{0}_a \frac{k_j L+H}{\alpha_L |k_j|+\alpha_H}\ket{\psi} + \ket{\perp_{R,a}}. 
\end{equation}
We further shrink the rescaling factor of $k_j L+H$ by applying two additional controlled rotations on $R_k$ and $R_k$, acting as 
\begin{equation}
    \ket{k}\ket{0} \rightarrow  \ket{k} \left( \frac{\alpha_L |k| + \alpha_H}{\alpha_L K + \alpha_H} \ket{0} + \sqrt{1-\left|\frac{\alpha_L |k| + \alpha_H}{\alpha_L K + \alpha_H}\right|^2} \ket{1} \right), 
\end{equation}
and 
\begin{equation}
    \ket{t}\ket{0} \rightarrow \ket{t} \left( \frac{t}{T}\ket{0} + \sqrt{1-\left|\frac{t}{T}\right|^2}\ket{1}\right),
\end{equation}
respectively.
Then we obtain 
\begin{equation}
    \ket{j}\ket{l} \ket{k_j}_j \ket{t_l}_l   \ket{0}_R \ket{0}_{R_k} \ket{0}_{R_t} \ket{0}_a \frac{t_l (k_j L+H) }{T(\alpha_L K +\alpha_H)}\ket{\psi} + \ket{\perp_{R,R_k,R_t,a}}. 
\end{equation}
Uncomputing the $j$ and $l$ registers by applying $O_k^{\dagger}$ and $O_t^{\dagger}$ gives 
\begin{equation}\label{eqn:alg_sel_H_out}
    \ket{j}\ket{l} \ket{0}_j \ket{0}_l   \ket{0}_R \ket{0}_{R_k} \ket{0}_{R_t} \ket{0}_a \frac{t_l (k_j L+H) }{T(\alpha_L K +\alpha_H)}\ket{\psi} + \ket{\perp_{R,R_k,R_t,a}}. 
\end{equation}
We have shown how to apply a sequence of operators to map~\cref{eqn:alg_sel_H_in} to~\cref{eqn:alg_sel_H_out}. 
By the definition of block encoding, this sequence of operators, denoted by $U_{t(kL+H)}$, is indeed a controlled block encoding of $t(kL+H)$, satisfying
\begin{equation}
    \left(\bra{0}_{a'} \otimes I\right) U_{t(kL+H)} \left(\ket{0}_{a'} \otimes I\right) = \sum_{j=0}^{M_k-1} \sum_{l=0}^{M_t-1} \ket{j}\bra{j} \otimes \ket{l}\bra{l} \otimes \frac{t_l (k_j L+H)}{T(\alpha_L K + \alpha_H)}. 
\end{equation}
Here, the ancilla register labeled by $a'$ combines all the previous ancilla registers. 

Now we use $U_{t(kL+H)}$ as the block encoding in the QSVT circuit for the time-evolution operator $e^{-iT(\alpha_L K + \alpha_H)\widetilde{H}}$. 
Then, by~\cite[Corollary 60]{GilyenSuLowEtAl2019}, we obtain the select oracle 
\begin{equation}\label{eqn:alg_select_oracle}
    \text{SEL} = \sum_{j=0}^{M_k-1} \sum_{l=0}^{M_t-1} \ket{j}\bra{j} \otimes \ket{l}\bra{l} \otimes W_{j,l}. 
\end{equation}
Here $W_{j,l}$ block encodes another matrix $V_{j,l} \approx U(k_j,t_l)$. 
Finally, by the LCU lemma (see, e.g.,~\cite[Lemma 22]{AnChildsLin2023}), the operator 
\begin{equation}
    (O_{c,l}^{\dagger} \otimes O_{\hat{c},l}^{\dagger} \otimes I) \text{SEL} (O_{c,r}\otimes O_{\hat{c},r} \otimes I)
\end{equation} 
gives a block encoding of $\frac{1}{\norm{c}_1\norm{\hat{c}}_1} \sum_{j,l} c_j\hat{c}_l V_{j,l} $, which is an approximation of $\frac{1}{\norm{c}_1\norm{\hat{c}}_1} h(A)$.

\subsection{Complexity analysis}\label{sec:Lap-LCHS_complexity}

\begin{thm}\label{thm:complexity_block_encoding}
    Let $f, {g}, h$ be functions satisfying the assumptions in~\cref{thm:LCHS_hardy}, and let $A$ be a matrix with positive semi-definite matrix real part. 
    Suppose that we are given the oracles in~\cref{sec:Lap-LCHS_oracles}. 
    Then Lap-LCHS can implement an $(\alpha,\epsilon)$-block encoding of $h(A)$ with the following properties.
    \begin{enumerate}
        \item The block encoding factor satisfies 
        \begin{equation}
            \alpha = \Or(\norm{f}_{L^1(\mathbb{R})}\norm{{g}}_{L^1(\RR_+)}). 
        \end{equation}
        \item It suffices to choose $K$ and $T$ such that $\norm{{g}}_{L^1((T,\infty))} = \Or(\epsilon/\norm{f}_{L^1(\mathbb{R})})$ and $\norm{f}_{L^1(\mathbb{R}\setminus[-K,K])}  = \Or(\epsilon/\norm{{g}}_{L^1(\RR_+)})$, and choose $M_k,M_t$ as in~\cref{lem:quadrature}. 
        \item The algorithm uses 
        \begin{equation}
            \Or\left( \alpha_A K T + \log\left(\frac{\norm{f}_{L^1(\mathbb{R})}\norm{{g}}_{L^1(\RR_+)}}{\epsilon}\right)\right)
        \end{equation}
        queries to $O_A$, and $\Or(1)$ queries to the oracles $O_k,O_t,O_c,O_{\hat{c}}$. 
    \end{enumerate}
\end{thm}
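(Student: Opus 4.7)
The plan is to decompose the target accuracy $\epsilon$ into three budgets and handle them in turn: a truncation budget for restricting the integral~\cref{eqn:lchs_matrixhardy} to the box $[-K,K]\times[0,T]$, a quadrature budget for the Riemann sum~\cref{eqn:quadrature}, and a Hamiltonian simulation budget for approximating each $U(k_j,t_l)$ by the QSVT-produced $V_{j,l}$. The first two are handled by direct application of~\cref{lem:quadrature}, which immediately gives the choices of $K,T,M_k,M_t$ stated in claim~(2). The third budget, call it $\epsilon'$, will propagate through the final LCU step and determine the $\log(1/\epsilon)$ term in the matrix query count.

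For the block encoding factor in claim~(1), I would invoke the LCU lemma (e.g., \cite[Lemma 22]{AnChildsLin2023}) on the composite circuit $(O_{c,l}^{\dagger}\otimes O_{\hat c,l}^{\dagger}\otimes I)\,\mathrm{SEL}\,(O_{c,r}\otimes O_{\hat c,r}\otimes I)$, which yields the normalization factor $\norm{c}_1\norm{\hat c}_1$. Bounding $|c_j|\le h_k|f(k_j)|$ and $|\hat c_l|=h_t|g(t_l)|$ and recognizing these as Riemann sums over $[-K,K]$ and $[0,T]$, one obtains $\norm{c}_1 = \Or(\norm{f}_{L^1(\RR)})$ and $\norm{\hat c}_1 = \Or(\norm{g}_{L^1(\RR_+)})$, so the composite block encoding factor is $\Or(\norm{f}_{L^1(\RR)}\norm{g}_{L^1(\RR_+)})$.

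For the query complexity in claim~(3), the outer LCU circuit invokes $\mathrm{SEL}$ and each of $O_{c,l/r},O_{\hat c,l/r}$ only a constant number of times; the quadrature oracles $O_k,O_t$ appear inside $U_{t(kL+H)}$ and are also invoked $\Or(1)$ times. The controlled block encoding $U_{t(kL+H)}$ is itself an LCU over $L,H$, which uses $\Or(1)$ queries to $O_L,O_H$ and hence $\Or(1)$ queries to $O_A$. Feeding $U_{t(kL+H)}$ into \cite[Corollary 60]{GilyenSuLowEtAl2019} with target time $T(\alpha_L K+\alpha_H)$ and precision $\epsilon'$ per $W_{j,l}$ costs $\Or(T(\alpha_L K+\alpha_H)+\log(1/\epsilon'))$ queries to $U_{t(kL+H)}$. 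Since LCU adds errors linearly weighted by the coefficients, taking $\epsilon' = \Or(\epsilon/(\norm{f}_{L^1(\RR)}\norm{g}_{L^1(\RR_+)}))$ keeps the Hamiltonian simulation contribution within $\epsilon$; combined with $\alpha_L,\alpha_H\le\alpha_A$, this gives the stated count $\Or(\alpha_A K T+\log(\norm{f}_{L^1(\RR)}\norm{g}_{L^1(\RR_+)}/\epsilon))$.

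The main obstacle I expect is the careful bookkeeping of errors across the nested layers --- truncation, quadrature, per-term Hamiltonian simulation, and the outer LCU linear combination --- and choosing $\epsilon'$ so that the final error bound is exactly $\epsilon$ without inflating the logarithmic factor. A secondary technical point is verifying that the SEL oracle defined in~\cref{eqn:alg_select_oracle} is compatible with the asymmetric state preparation $O_{c,l}\neq O_{c,r}$ used in the LCU lemma when the coefficients $c_j$ are complex, which is needed to produce the correct amplitudes $c_j\hat c_l$ in the weighted sum rather than $|c_j||\hat c_l|$.
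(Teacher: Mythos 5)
Your proposal follows essentially the same route as the paper's proof: the same three-way error split (truncation and quadrature via~\cref{lem:quadrature}, per-term Hamiltonian simulation error $\epsilon'$ chosen as $\Or(\epsilon/(\norm{f}_{L^1(\mathbb{R})}\norm{g}_{L^1(\RR_+)}))$), the same use of \cite[Corollary 60]{GilyenSuLowEtAl2019} for SEL, and the same LCU normalization $\norm{c}_1\norm{\hat c}_1$ bounded by the $L^1$ norms. The asymmetric-prepare point you flag is exactly what the paper's definitions of $O_{c,l/r}$, $O_{\hat c,l/r}$ (with conjugated square-root amplitudes) and \cite[Lemma 22]{AnChildsLin2023} are designed to handle, so no gap remains.
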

\begin{proof}
    Following the steps in~\cref{sec:Lap-LCHS_algorthm}, we can construct a controlled $(T(\alpha_L K + \alpha_H),0)$-block encoding of $t(kL+H)$, using $\Or(1)$ queries to $O_A$. 
    According to~\cite[Corollary 60]{GilyenSuLowEtAl2019}, for $\epsilon_1 > 0$ to be determined later, we can implement 
    \begin{equation}
    \text{SEL} = \sum_{j=0}^{M_k-1} \sum_{l=0}^{M_t-1} \ket{j}\bra{j} \otimes \ket{l}\bra{l} \otimes W_{j,l}, 
    \end{equation}
    where $W_{j,l}$ is a $(1,0)$-block encoding of $V_{j,l}$ such that $\norm{V_{j,l}-U(k_j,t_l)} \leq \epsilon_1$, using 
    \begin{equation}
        \Or\left( T(\alpha_L K + \alpha_H) + \log\left(\frac{1}{\epsilon_1}\right) \right) = \Or\left( \alpha_A K T + \log\left(\frac{1}{\epsilon_1}\right)\right)
    \end{equation}
    queries to $O_A$. 
    Then the LCU lemma (see, e.g.,~\cite[Lemma 22]{AnChildsLin2023}) ensures that $(O_{c,l}^{\dagger} \otimes O_{\hat{c},l}^{\dagger} \otimes I) \text{SEL} (O_{c,r}\otimes O_{\hat{c},r} \otimes I)$ gives a $(\norm{c}_1\norm{\hat{c}}_1,0)$-block encoding of $ \sum_{j,l} c_j \hat{c}_l V_{j,l}$. 
    By the definition of the coefficients, we can further bound the block encoding factor as $\norm{c}_1\norm{\hat{c}}_1 = \Or(\norm{f}_{L^1(\mathbb{R})}\norm{{g}}_{L^1(0,\infty)})$. 

    We now estimate the approximation error between $\sum_{j,l} c_j \hat{c}_l V_{j,l}$ and the ideal operator $h(A)$ and determine the choices of the parameters. 
    By the triangle inequality, we have 
    \begin{align}
        \norm{h(A) - \sum_{j,l} c_j \hat{c}_l V_{j,l}} & \leq \norm{ \sum_{j,l} c_j \hat{c}_l V_{j,l} - \sum_{j,l} c_j \hat{c}_l U(k_j,t_l) } + \norm{h(A) - \sum_{j,l} c_j \hat{c}_l U(k_j,t_l)} \\
        & \leq  \sum_{j,l} |c_j| |\hat{c}_l| \norm{V_{j,l}-U(k_j,t_l)} + \norm{h(A) - \sum_{j,l} c_j \hat{c}_l U(k_j,t_l)} \\
        & \leq \norm{c}_1 \norm{\hat{c}}_1 \epsilon_1 + \norm{h(A) - \sum_{j,l} c_j \hat{c}_l U(k_j,t_l)}. 
    \end{align}
    To bound this by $\epsilon$, it suffices to choose 
    \begin{equation}
        \epsilon_1 = \Or\left( \frac{\epsilon}{\norm{f}_{L^1(\mathbb{R})}\norm{{g}}_{L^1(\RR_+)}} \right) 
    \end{equation}
    and bound the quadrature error $\norm{h(A) - \sum_{j,l} c_j \hat{c}_l U(k_j,t_l)}$ by $\Or(\epsilon)$. 
    \cref{lem:quadrature} gives sufficient choices of $K,T$ and $M_k,M_t$. 
\end{proof}

The state $h(A)\ket{\psi}/\norm{h(A)\ket{\psi}}$ can then be approximated by applying the block encoding of $h(A)$ onto the input state $\ket{\psi}$ and boosting the success probability by amplitude amplification. 

\begin{cor}\label{cor:complexity_state}
    Let $f, {g}, h$ be functions satisfying the assumptions in~\cref{thm:LCHS_hardy}, let $A$ be a matrix with positive semi-definite real part, and let $\ket{\psi}$ be an input state.
    Suppose that we are given the oracles described in~\cref{sec:Lap-LCHS_oracles}. 
    Then Lap-LCHS can prepare an $\epsilon$-approximation of the state $h(A)\ket{\psi}/\norm{h(A)\ket{\psi}}$ with $\Omega(1)$ success probability and a flag indicating success, using 
    \begin{equation}
            \Or\left( \frac{\norm{f}_{L^1(\mathbb{R})}\norm{{g}}_{L^1(\RR_+)}}{\norm{h(A) \ket{\psi}}} \left(\alpha_A K T + \log\left(\frac{\norm{f}_{L^1(\mathbb{R})}\norm{{g}}_{L^1(\RR_+)}}{\norm{h(A)\ket{\psi}} \epsilon}\right)\right) \right)
    \end{equation}
    queries to $O_A$, and 
    \begin{equation}
        \Or\left( \frac{\norm{f}_{L^1(\mathbb{R})}\norm{{g}}_{L^1(\RR_+)}}{\norm{h(A) \ket{\psi}}} \right)
    \end{equation}
    queries to $O_{\psi}$ and quadrature unitaries. 
    Here $K$ and $T$ are truncation parameters such that $\norm{{g}}_{L^1((T,\infty))} = \Or(\epsilon \norm{h(A)\ket{\psi}}/\norm{f}_{L^1(\mathbb{R})})$ and $\norm{f}_{L^1(\mathbb{R}\setminus[-K,K])}  = \Or(\epsilon \norm{h(A)\ket{\psi}}/\norm{{g}}_{L^1(\RR_+)})$. 
\end{cor}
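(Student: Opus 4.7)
The plan is to obtain the state preparation corollary as a direct consequence of \cref{thm:complexity_block_encoding} combined with amplitude amplification, with careful bookkeeping so that all approximation errors accumulate to at most $\epsilon$ in the final normalized state. First, I would invoke \cref{thm:complexity_block_encoding} to build an $(\alpha,\epsilon')$-block encoding $\widetilde{U}$ of $h(A)$ with $\alpha=\Or(\norm{f}_{L^1(\RR)}\norm{{g}}_{L^1(\RR_+)})$, where the internal precision $\epsilon'$ is to be chosen at the end of the analysis. Second, I would prepare $\ket{\psi}$ with a single query to $O_{\psi}$ and apply $\widetilde{U}$ to $\ket{0}_{a'}\ket{\psi}$. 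By definition of a block encoding, the component of $\widetilde{U}\ket{0}_{a'}\ket{\psi}$ on the flag state $\ket{0}_{a'}$ is $\alpha^{-1}(h(A)\ket{\psi}+E\ket{\psi})$ for some error operator with $\norm{E}\le\epsilon'$. Hence the amplitude of the ``good'' branch is $\Theta\bigl(\norm{h(A)\ket{\psi}}/\alpha\bigr)$, and post-selection on this branch yields a state within $\Or(\epsilon'/\norm{h(A)\ket{\psi}})$ of the target $h(A)\ket{\psi}/\norm{h(A)\ket{\psi}}$.

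Next, I would drive the success probability to $\Omega(1)$ using (fixed-point) amplitude amplification, which requires $\Or(\alpha/\norm{h(A)\ket{\psi}})=\Or(\norm{f}_{L^1(\RR)}\norm{{g}}_{L^1(\RR_+)}/\norm{h(A)\ket{\psi}})$ rounds. Each round uses one invocation of $\widetilde{U}$, one of $\widetilde{U}^\dagger$, and $\Or(1)$ reflections that, beyond cheap Clifford gates, invoke $O_{\psi}$ and the quadrature unitaries a constant number of times. Multiplying through, this immediately accounts for the claimed $\Or\bigl(\norm{f}_{L^1(\RR)}\norm{{g}}_{L^1(\RR_+)}/\norm{h(A)\ket{\psi}}\bigr)$ bound on queries to $O_{\psi}$ and to the quadrature unitaries $O_k,O_t,O_c,O_{\hat{c}}$, while queries to $O_A$ pick up an additional factor equal to the $O_A$-cost of a single $\widetilde{U}$ from \cref{thm:complexity_block_encoding}.

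Finally, I would calibrate $\epsilon'$. The standard error analysis of amplitude amplification applied to an approximate block encoding shows that the output state differs from $h(A)\ket{\psi}/\norm{h(A)\ket{\psi}}$ by at most $\Or(\epsilon'/\norm{h(A)\ket{\psi}})$ in norm, so setting $\epsilon'=\Or(\epsilon\norm{h(A)\ket{\psi}})$ suffices. Substituting this into the bounds of \cref{thm:complexity_block_encoding} turns the logarithmic factor $\log(\norm{f}_{L^1(\RR)}\norm{{g}}_{L^1(\RR_+)}/\epsilon')$ into $\log\bigl(\norm{f}_{L^1(\RR)}\norm{{g}}_{L^1(\RR_+)}/(\norm{h(A)\ket{\psi}}\epsilon)\bigr)$ and rescales the truncation requirements on $K$ and $T$ by the same $\norm{h(A)\ket{\psi}}$ factor, producing exactly the tail-integral conditions stated in the corollary.

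The main obstacle I expect is the error propagation: one must confirm that the $1/\alpha$ suppression inside the block encoding is counterbalanced by the $\alpha/\norm{h(A)\ket{\psi}}$ amplification length, so that the effective error in the final normalized state scales as $\epsilon'/\norm{h(A)\ket{\psi}}$ rather than, say, $\alpha\epsilon'/\norm{h(A)\ket{\psi}}^2$. Once this scaling is pinned down (which follows from a perturbative analysis of the Grover/amplitude-amplification iterate around the ideal rotation angle), every remaining step is a direct substitution into \cref{thm:complexity_block_encoding} and \cref{lem:quadrature}.
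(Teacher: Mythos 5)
Your proposal is correct and follows essentially the same route as the paper: invoke \cref{thm:complexity_block_encoding} with internal precision $\epsilon'$, apply the block encoding to $\ket{0}\ket{\psi}$, bound the deviation of the post-selected state by $\Or(\epsilon'/\norm{h(A)\ket{\psi}})$, set $\epsilon'=\Theta(\epsilon\norm{h(A)\ket{\psi}})$, and amplify with $\Or(\alpha/\norm{h(A)\ket{\psi}})$ rounds of amplitude amplification to get the stated query counts. The error-propagation worry you flag is resolved exactly as the paper does it: the LCU circuit is an exact $(\alpha,0)$-block encoding of the nearby operator $B$ with $\norm{B-h(A)}\le\epsilon'$, so amplitude amplification acts exactly on $B\ket{\psi}/\norm{B\ket{\psi}}$ and introduces no additional error beyond the already-bounded $2\epsilon'/\norm{h(A)\ket{\psi}}$ (using $\norm{B\ket{\psi}}\ge(1-\epsilon/2)\norm{h(A)\ket{\psi}}$ for the repetition count).
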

\begin{proof}
    We first use~\cref{thm:complexity_block_encoding} to construct an $(\alpha,\epsilon')$-block encoding of $h(A)$, where the precision parameter $\epsilon'$ will be determined later. 
    Applying this block encoding to $\ket{0}\ket{\psi}$ gives 
    \begin{equation}
        \frac{1}{\alpha}\ket{0} B \ket{\psi} + \ket{\perp}, 
    \end{equation}
    where $B$ is an operator such that $\norm{B - h(A)} \leq \epsilon'$. 
    Upon projecting the ancilla onto $\ket{0}$, we obtain the state $B\ket{\psi}/\norm{B\ket{\psi}}$. 
    Furthermore, 
    \begin{equation}
        \norm{\frac{B\ket{\psi}}{\norm{B\ket{\psi}}} - \frac{h(A)\ket{\psi}}{\norm{h(A)\ket{\psi}}} } \leq \frac{2}{\norm{h(A)\ket{\psi}}} \norm{ B\ket{\psi} - h(A)\ket{\psi} } \leq \frac{2\epsilon'}{\norm{h(A)\ket{\psi}}}. 
    \end{equation}
    To bound this by $\epsilon$, it suffices to choose $\epsilon' = \norm{h(A)\ket{\psi}} \epsilon/2$. 
    According to~\cref{thm:complexity_block_encoding}, in each run of the algorithm, we use 
    \begin{equation}
            \Or\left( \alpha_A K T + \log\left(\frac{\norm{f}_{L^1(\mathbb{R})}\norm{{g}}_{L^1(\RR_+)}}{\norm{h(A)\ket{\psi}} \epsilon}\right)\right)
    \end{equation}
    queries to $O_A$, and $\Or(1)$ queries to the state preparation oracle $O_{\psi}$ and quadrature unitaries.
    
    With amplitude amplification, the number of repetitions for constant success probability is 
    \begin{equation}
        \Or\left( \frac{\alpha}{\norm{B\ket{\psi}}} \right) = \Or\left( \frac{\norm{f}_{L^1(\mathbb{R})}\norm{{g}}_{L^1(\RR_+)}}{\norm{h(A) \ket{\psi}}} \right),  
    \end{equation}
    where we have used $\norm{B\ket{\psi}} \geq \norm{h(A)\ket{\psi}} - \epsilon' = (1-\epsilon/2)\norm{h(A)\ket{\psi}}$. 
    This contributes to another multiplicative factor in the claimed complexity and completes the proof. 
\end{proof}

\section{Applications}\label{sec:applications}

Here we discuss several applications of eigenvalue transformation problems and analyze the query complexity of the Lap-LCHS algorithm for them. 
These applications include linear inhomogeneous differential equations, powers of the matrix inverse, linear differential equations with non-normal mass matrices, and second-order differential equations. 
Throughout this section, in the LCHS formula we always use the asymptotically optimal kernel function $f(k)$ defined in~\cref{eqn:kernel_intro}. We regard the parameter $\beta \in (0,1)$ in~\cref{eqn:kernel_intro} as fixed; the big-$\Or$ constants in this section may depend on $\beta$. 

\subsection{Linear inhomogeneous differential equations}\label{sec:linear_inhom}

Consider the linear inhomogeneous differential equation in \cref{eqn:firstorder_inhom}. By variation of constants, the solution is
\begin{equation}\label{eqn:inhom_lcus}
u(T)=\int_0^T e^{-A(t-s)}\ket{\psi}\ud s= h(A)\ket{\psi},
\end{equation}
where
\begin{equation}
h(z)=\frac{1}{ z}(1-e^{-T z}). 
\end{equation}
This function has the inverse Laplace transform ${g}(t)={\bm 1}_{[0,T]}(t)$, which is the indicator function. Lap-LCHS for evaluating $h(A)\ket{\psi}$ uses the integral representation in \cref{eqn:inhom_lcus}. 
The overall complexity can be estimated as a direct consequence of~\cref{cor:complexity_state}, by noticing that $\norm{g}_{L^1(\RR_+)} = T$ and the truncation parameter $K = \Or((\log(T/( \norm{u(T)}\epsilon)))^{1/\beta})$ for fixed $\beta \in (0,1)$. 

\begin{cor}\label{cor:inhomogeneous_DE}
    Consider the linear inhomogeneous differential equation in~\cref{eqn:firstorder_inhom}, where the coefficient matrix $A$ has a positive semi-definite matrix real part. 
    Then, for any fixed $\beta \in (0,1)$, Lap-LCHS can prepare an $\epsilon$-approximation of the state $u(T)/\norm{u(T)}$ with $\Omega(1)$ success probability and a flag indicating success, using 
    \begin{equation}
        \Or\left( \frac{ 1 }{\norm{u(T)}} \alpha_A T^2 \left( \log\left( \frac{T}{\norm{u(T)} \epsilon} \right) \right)^{1/\beta}  \right)
    \end{equation}
    queries to the block encoding of $A$ with block encoding factor $\alpha_A$, and 
    \begin{equation}
        \Or\left( \frac{ T }{\norm{ u(T) }} \right)
    \end{equation}
    queries to the state preparation oracle for $\ket{\psi}$. 
    Here $\beta \in (0,1)$ is the order parameter in the kernel function of the LCHS formula. 
\end{cor}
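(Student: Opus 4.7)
The proof is essentially a bookkeeping exercise: we invoke \cref{cor:complexity_state} with the specific $g$ and $h$ for this problem and substitute the relevant norms and truncation parameters. The plan is as follows.

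First I would verify the Laplace representation: since $g(t)=\mathbf{1}_{[0,T]}(t)$ is integrable with $\|g\|_{L^1(\RR_+)}=T$ and its Laplace transform yields $h(z)=(1-e^{-Tz})/z$, the hypothesis $g\in L^1(\RR_+)$ of \cref{thm:LCHS_hardy} holds. Moreover, because $g$ has compact support in $[0,T]$, the tail bound $\|g\|_{L^1((T,\infty))}=0$ is automatic, so the time-truncation parameter appearing in \cref{cor:complexity_state} can be taken to be exactly $T$ with no additional error contribution.

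Next I would handle the $k$-truncation. Using the near-optimal kernel $f$ defined in \cref{eqn:kernel_intro}, one has $\|f\|_{L^1(\RR)}=\Or(1)$ by \cite[Proposition~7]{AnChildsLin2023} together with the near-exponential decay $|f(k)|=\Or(e^{-c|k|^\beta})$. The required condition
\begin{equation}
\|f\|_{L^1(\RR\setminus[-K,K])}=\Or\!\left(\epsilon\,\|u(T)\|/\|g\|_{L^1(\RR_+)}\right)=\Or\!\left(\epsilon\,\|u(T)\|/T\right)
\end{equation}
is therefore satisfied by choosing
\begin{equation}
K=\Or\!\left(\bigl(\log(T/(\|u(T)\|\epsilon))\bigr)^{1/\beta}\right).
\end{equation}

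Finally, I would substitute these values into the two complexity expressions of \cref{cor:complexity_state}. The prefactor $\|f\|_{L^1(\RR)}\|g\|_{L^1(\RR_+)}/\|h(A)\ket{\psi}\|$ reduces to $\Or(T/\|u(T)\|)$, immediately giving the claimed $\Or(T/\|u(T)\|)$ bound on state-preparation queries. For the matrix queries, the dominant term is $\alpha_A K T$ multiplied by this prefactor, yielding $\Or\!\bigl(\alpha_A T^2/\|u(T)\|\cdot(\log(T/(\|u(T)\|\epsilon)))^{1/\beta}\bigr)$; the logarithmic additive term inside the parentheses is absorbed into this expression, producing the stated bound.

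I do not expect any substantive obstacle: all nontrivial analytic work (the quadrature error, the block-encoding construction, and the amplitude-amplification accounting) is already carried out in \cref{lem:quadrature}, \cref{thm:complexity_block_encoding}, and \cref{cor:complexity_state}. The only mildly delicate point is keeping track of the dependence on $\|u(T)\|$ inside the logarithm when determining $K$, which is why the final bound carries the factor $(\log(T/(\|u(T)\|\epsilon)))^{1/\beta}$ rather than $(\log(1/\epsilon))^{1/\beta}$; this should be handled by using the sharper requirement from \cref{cor:complexity_state} rather than the generic one from \cref{thm:complexity_block_encoding}.
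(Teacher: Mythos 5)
Your proposal is correct and follows essentially the same route as the paper: the result is read off from \cref{cor:complexity_state} by noting $\norm{g}_{L^1(\RR_+)}=T$ (with the tail over $(T,\infty)$ vanishing since $g=\mathbf{1}_{[0,T]}$) and choosing $K=\Or\bigl((\log(T/(\norm{u(T)}\epsilon)))^{1/\beta}\bigr)$ so that the additive log term is absorbed by $\alpha_A K T$. No gaps.
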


The best known quantum algorithm prior to LCHS was the truncated Taylor series method proposed in~\cite{BerryChildsOstranderEtAl2017}. 
This method discretizes the ODE in~\cref{eqn:firstorder_inhom} using the truncated Taylor series, formulates the discretized evolution as a dilated linear system of equations, and then solves it with the LCU-based quantum linear system algorithm~\cite{ChildsKothariSomma2017}. 
Reference~\cite{BerryChildsOstranderEtAl2017} gives a detailed complexity analysis using a sparse input oracle for the coefficient matrix. 
A recent work~\cite{BerryCosta2022}, besides generalizing this idea to the time-dependent case, establishes an improved complexity estimate with block encoding oracle access and an optimal quantum linear system algorithm~\cite{CostaAnSandersEtAl2022}. 

A comparison between Lap-LCHS and the truncated Taylor series method is given in~\cref{tab:comparison_app_1st_ODE_inhomo}. We see that Lap-LCHS has better state preparation cost by fully eliminating the explicit dependence on $\alpha_A$ and $\epsilon$. 
In terms of the matrix query complexity, Lap-LCHS is better in terms of $\epsilon$ as it has a lower degree of the $\log(1/\epsilon)$ term. 
However, for simulation up to fixed accuracy, by noticing that $\max_t\norm{u(t)} \leq T$, Lap-LCHS only has at most comparable (and sometimes worse) matrix query complexity compared to the truncated Taylor series method. 

\begin{table}[t]
    \renewcommand{\arraystretch}{2}
    \centering
    \scalebox{1}{
    \begin{tabular}{c|c|c}\hline\hline
        \textbf{Method} & 
        \textbf{Queries to $A$} & 
        \textbf{Queries to $\ket{\psi}$} \\\hline 
        This work (\cref{cor:inhomogeneous_DE}) & 
        $\wt{\Or}\left( \frac{ 1 }{\norm{u(T)}} \alpha_A T^2 \left( \log\left( \frac{1}{\epsilon} \right) \right)^{1/\beta} \right)$  & 
        $\Or\left( \frac{ T }{\norm{ u(T) }} \right)$ \\\hline
        Taylor~\cite{BerryChildsOstranderEtAl2017,BerryCosta2022} & 
        $\wt{\Or}\left( \frac{ \max_t\norm{u(t)} }{\norm{u(T)}} \alpha_A T \left( \log\left( \frac{1}{\epsilon} \right) \right)^{2} \right)$  & 
        $ \wt{\Or}\left( \frac{ \max_t\norm{u(t)} }{\norm{u(T)}} \alpha_A T \log\left( \frac{1}{\epsilon} \right)  \right) $ \\\hline\hline
    \end{tabular}
    }
    \caption{ Comparison between Lap-LCHS and the previous approach for solving the ODE~\cref{eqn:firstorder_inhom}. 
    Here, we assume the Cartesian decomposition $A = L+iH$ with $L \succeq 0$. $\alpha_A$ is the block encoding factor of $A$, $T$ is the evolution time, and $\epsilon$ is the tolerated error in the output state. }
    \label{tab:comparison_app_1st_ODE_inhomo}
\end{table}

\subsection{Power of matrix inverse}\label{sec:power_inverse}

Consider the problem of solving the shifted linear system of equations 
\begin{equation}\label{eqn:shifted_LSE}
    (\eta I + A) x = \ket{b}. 
\end{equation}
Here $\eta > 0$ is the shifting parameter. 
The goal of a quantum algorithm for~\cref{eqn:shifted_LSE} is to approximately prepare the quantum state $(\eta I + A)^{-1} \ket{b} / \norm{(\eta I + A)^{-1} \ket{b}}$. 
Here, we discuss a more general problem of preparing the state 
\begin{equation}
    \ket{x} = \frac{ (\eta I + A)^{-p} \ket{b} }{ \norm{(\eta I + A)^{-p} \ket{b}} }, 
\end{equation}
where $p > 0$ is a real positive parameter.  

In the Lap-LCHS framework, we choose the function 
\begin{equation}
    h(z) = (\eta + z)^{-p}. 
\end{equation}
Its inverse Laplace transform is 
\begin{equation}
    g(t) = \frac{1}{\Gamma(p)} t^{p-1} e^{-\eta t}, 
\end{equation}
where $\Gamma(p)$ denotes the gamma function. 
The following result shows the complexity of our algorithm. Its proof can be found in~\cref{app:applications_proof_linear_system}. 

\begin{cor}\label{cor:app_linear_system}
    Let $A$ be a matrix with positive semi-definite real part, and let $\eta > 0$ and $p > 0$. 
    Then for any fixed $\beta \in (0,1)$, Lap-LCHS can prepare an $\epsilon$-approximation of the state $\ket{x} = x/\norm{x}$ where $x = (\eta I + A)^{-p}\ket{b}$, with $\Omega(1)$ success probability and a flag indicating success, using 
    \begin{equation}
            \Or\left( \frac{\alpha_A}{\eta^{p+1} \norm{x}} \left(\log\left( \frac{1}{\epsilon \eta \norm{x}} \right)\right)^{1+1/\beta}  \right)
    \end{equation}
    queries to the block encoding of $A$ with block encoding factor $\alpha_A$, and
    \begin{equation}
        \Or\left( \frac{1}{\eta^p \norm{x} } \right)
    \end{equation}
    queries to the state preparation oracle for $\ket{b}$. 
\end{cor}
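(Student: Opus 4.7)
The plan is to apply \cref{cor:complexity_state} directly, so all the work is in computing the three input quantities $\norm{g}_{L^1(\RR_+)}$, $\norm{f}_{L^1(\RR)}$, and the truncation parameters $K$ and $T$ specialized to this choice of $h(z) = (\eta+z)^{-p}$ and $g(t) = \frac{1}{\Gamma(p)} t^{p-1} e^{-\eta t}$. First I would verify that $g$ is indeed the inverse Laplace transform of $h$ (a standard gamma-integral identity) and compute
\[
\norm{g}_{L^1(\RR_+)} = \frac{1}{\Gamma(p)}\int_0^\infty t^{p-1} e^{-\eta t}\,\ud t = \eta^{-p},
\]
which will be the source of the $\eta^{-p}$ factor in the final estimate. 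For $f$, the asymptotically near-optimal kernel in \cref{eqn:kernel_intro} decays like $e^{-c\abs{k}^\beta}$ on the real axis, so $\norm{f}_{L^1(\RR)} = \Or(1)$ with a constant depending only on $\beta$, as recorded in \cite{AnChildsLin2023}.

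Next I would pick the truncation parameters to meet the hypotheses of \cref{cor:complexity_state}. For $K$, I need $\norm{f}_{L^1(\RR\setminus[-K,K])} = \Or(\epsilon \norm{x}/\norm{g}_{L^1(\RR_+)}) = \Or(\epsilon \norm{x}\eta^p)$; because of the stretched-exponential decay of $f$, it suffices to take
\[
K = \Or\!\left(\left(\log\!\frac{1}{\epsilon \norm{x}\eta^p}\right)^{1/\beta}\right).
\]
For $T$, I need $\norm{g}_{L^1((T,\infty))} = \Or(\epsilon \norm{x})$. An integration by parts (or the standard incomplete-gamma tail estimate) gives $\int_T^\infty t^{p-1}e^{-\eta t}\,\ud t = \Or(T^{p-1}e^{-\eta T}/\eta)$, so
\[
T = \Or\!\left(\frac{1}{\eta}\log\!\frac{1}{\epsilon \eta \norm{x}}\right)
\]
works (absorbing the polynomial prefactor into the log via the usual trick, since $\log(T^{p-1})$ is lower order). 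These two parameters are the only place where $p$ and $\eta$ enter nontrivially.

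Finally I would substitute into the query formulas of \cref{cor:complexity_state}. The state preparation count is
\[
\Or\!\left(\frac{\norm{f}_{L^1(\RR)}\norm{g}_{L^1(\RR_+)}}{\norm{x}}\right) = \Or\!\left(\frac{1}{\eta^p \norm{x}}\right),
\]
giving the second half of the claim. The matrix-query count contains an additive logarithmic term dominated by $\alpha_A K T$, and multiplying by the same LCU prefactor yields
\[
\Or\!\left(\frac{1}{\eta^p \norm{x}}\cdot \alpha_A\cdot \frac{1}{\eta}\log\!\frac{1}{\epsilon\eta\norm{x}}\cdot \left(\log\!\frac{1}{\epsilon \norm{x}\eta^p}\right)^{1/\beta}\right),
\]
which collapses (since $\eta^p$ inside the log is absorbed into the $1+1/\beta$ exponent up to constants depending on $p$) to the stated bound $\Or(\alpha_A \eta^{-(p+1)}\norm{x}^{-1}(\log(1/(\epsilon\eta\norm{x})))^{1+1/\beta})$.

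The only genuinely non-mechanical step is the truncation estimate for $T$: one must argue that the incomplete-gamma tail $\int_T^\infty t^{p-1}e^{-\eta t}\,\ud t$ falls off essentially like $e^{-\eta T}$ up to a polynomial in $T$ that can be swallowed logarithmically, so that $T$ grows only as $\log(1/\epsilon)/\eta$ rather than picking up any power of $1/\epsilon$. Everything else is arithmetic substitution into \cref{cor:complexity_state}.
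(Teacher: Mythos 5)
Your proposal is correct and follows essentially the same route as the paper's proof: compute $\norm{g}_{L^1(\RR_+)}=\eta^{-p}$ and $\norm{f}_{L^1(\RR)}=\Or(1)$, pick $K=\Or((\log(1/(\epsilon\eta\norm{x})))^{1/\beta})$ and $T=\Or(\eta^{-1}\log(1/(\epsilon\eta\norm{x})))$, and substitute into \cref{cor:complexity_state}. The only cosmetic difference is the tail estimate for $g$: you use the incomplete-gamma bound $\Or(T^{p-1}e^{-\eta T}/\eta)$ and absorb the polynomial prefactor logarithmically, whereas the paper splits $e^{-\eta t}=e^{-\eta t/2}e^{-\eta t/2}$ to get the cleaner bound $e^{-\eta T/2}(\eta/2)^{-p}$; both yield the same choice of $T$.
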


\subsection{Linear differential equations with mass matrices}\label{sec:firstorder_mass}

Consider the differential equation
\begin{equation}\label{eqn:app_1st_ODE}
A\frac{\ud u}{\ud t}=-u
\end{equation}
where $A = L+iH$ for Hermitian matrices $L,H$, and we assume the matrix real part satisfies $L \succeq \gamma > 0$. 
The goal of a quantum algorithm for~\cref{eqn:app_1st_ODE} is to prepare a quantum state approximating the normalized solution $u(T)/\|u(T)\|$ where $u(T) = e^{-T A^{-1}}u(0)$. 
We consider two types of initial conditions, namely 
\begin{equation}\label{eqn:app_1st_ini_cond_w_A}
    u(0) = A^{-1} u_0 
\end{equation}
or 
\begin{equation}\label{eqn:app_1st_ini_cond_w/o_A}
    u(0) = u_0. 
\end{equation}
In both cases, we are given an oracle for preparing the state $\ket{u_0} = u_0/\norm{u_0}$.

\subsubsection{Initial condition with matrix inverse}

First consider the initial condition in \cref{eqn:app_1st_ini_cond_w_A}.
In this case, the solution can be represented as 
\begin{equation}
    u(T) = e^{-T A^{-1}} A^{-1} u_0,
\end{equation}
so our goal is to implement the operator $e^{-T A^{-1}} A^{-1}$. 

We start with the Laplace transform
\begin{equation}\label{eqn:app_1st_ODE_laplace}
\frac{1}{z'}e^{-T/z'}=\int_0^\infty e^{-z' t'} J_0(2\sqrt{Tt'}) \ud t'. 
\end{equation}
Here, $J_0$ represents the Bessel function of the first kind of order $0$. 
Since $J_0(2\sqrt{Tt'})$ is not in $L^1$, we cannot directly implement this formula based on LCHS. 
However, since we assume the real part of the matrix is uniformly bounded away from $0$ by $\gamma$, we can consider the shifted version of~\cref{eqn:app_1st_ODE_laplace} by choosing $z' = z + \gamma$ and obtain 
\begin{equation}
\frac{1}{z+\gamma}e^{-T/(z+\gamma)}=\int_0^\infty e^{-zt'} e^{-\gamma t'}J_0(2\sqrt{Tt'}) \ud t'.
\end{equation}
By replacing $z$ by $A-\gamma I = (L - \gamma I) + iH$ and using~\cref{thm:LCHS_hardy}, we have 
\begin{align}
 e^{-T A^{-1}} A^{-1} &= \int_0^{\infty} \int_{\mathbb{R}} \frac{f(k)}{ 1-ik} e^{-\gamma t'} J_0(2\sqrt{Tt'}) e^{-it' (k(L-\gamma I)+H)} \ud k \ud t' \\
&= \int_0^{\infty} \int_{\mathbb{R}} \frac{f(k) {g}(t';T) e^{it' k \gamma} }{ 1-ik} e^{-it' (kL+H)} \ud k \ud t'. \label{eqn:app_1st_ODE_LCHS}
\end{align}
Here 
\begin{equation}
    {g}(t';T)= e^{-\gamma t'}J_0(2\sqrt{Tt'}), 
\end{equation}
and its $L^1$ norm is $\Or(\gamma^{-1})$ and is independent of $T$.

We have represented the operator $e^{-TA^{-1}} A^{-1}$ as LCHS as in~\cref{eqn:app_1st_ODE_LCHS}. 
Notice that~\cref{eqn:app_1st_ODE_LCHS} does not have the standard form specified in~\cref{thm:LCHS_hardy} since we have an extra phase factor $e^{it' k \gamma}$ involving both $t'$ and $k$ and the weight function is not separable in $t'$ and $k$. 
However, we can still implement~\cref{eqn:app_1st_ODE_LCHS} by our algorithm described in~\cref{sec:Lap-LCHS_algorthm}, with a small modification that that we apply an additional controlled phase gate on the select oracle~\cref{eqn:alg_select_oracle} to append the phase $e^{it' k \gamma}$. 
The complexity analysis in~\cref{thm:complexity_block_encoding} and~\cref{cor:complexity_state} still applies with this modification. 

The overall complexity is given in the following result. 
Its proof can be found in~\cref{app:applications_proof_1st_ODE_mass}.

\begin{cor}\label{cor:complexity_app_1st_ODE}
    Consider the differential equation~\cref{eqn:app_1st_ODE} with the initial condition~\cref{eqn:app_1st_ini_cond_w_A}, where the matrix real part of $A$ is positive definite and its eigenvalues are bounded from below by $\gamma > 0$. 
    Then for any fixed $\beta \in (0,1)$, we can prepare an $\epsilon$-approximation of the state $u(T)/\|u(T)\|$ with $\Omega(1)$ success probability and a flag indicating success, using 
    \begin{equation}
        \Or\left( \frac{\norm{u_0}}{\norm{u(T)}} \frac{\alpha_A}{\gamma^2}  \left( \log\left(\frac{\norm{u_0} }{\epsilon \norm{u(T)} \gamma }\right)\right)^{1+1/\beta} \right)
    \end{equation}
    queries to the block encoding of $A$ with block encoding factor $\alpha_A$, and 
    \begin{equation}
        \Or\left( \frac{\norm{u_0}}{\norm{u(T)}} \frac{1}{\gamma} \right)
    \end{equation}
    queries to the state preparation oracle of $\ket{u_0}$. 
    Here, $\beta \in (0,1)$ is the order parameter in the kernel function of the LCHS formula. 
\end{cor}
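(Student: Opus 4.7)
The plan is to apply~\cref{cor:complexity_state} directly to the Lap-LCHS representation~\eqref{eqn:app_1st_ODE_LCHS} of the eigenvalue transformation $h(A) = e^{-TA^{-1}} A^{-1}$ acting on the normalized input state $\ket{u_0}$. Since $h(A)\ket{u_0} = u(T)/\norm{u_0}$, one has $\norm{h(A)\ket{u_0}} = \norm{u(T)}/\norm{u_0}$, which accounts for the $\norm{u_0}/\norm{u(T)}$ ratio appearing in both query counts.

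First, I would handle the extra bilinear phase $e^{\I t' k \gamma}$ present in~\eqref{eqn:app_1st_ODE_LCHS} but absent from the template of~\cref{thm:LCHS_hardy}. This factor depends on both quadrature variables and cannot be absorbed into either prepare oracle; instead, I would inject it into the select oracle by inserting a controlled phase gate that reads the binary encodings of $k_j$ and $t'_l$ (after the applications of $O_k$ and $O_t$) and multiplies by $e^{\I t'_l k_j \gamma}$. This modification uses $\Or(1)$ additional gates and preserves the block-encoding structure, so the estimate of~\cref{cor:complexity_state} carries over. Using the asymptotically near-optimal kernel~\eqref{eqn:kernel_intro} gives $\norm{f}_{L^1(\RR)} = \Or(1)$, and for $g(t';T) = e^{-\gamma t'} J_0(2\sqrt{T t'})$ the uniform bound $\abs{J_0(x)} \leq 1$ yields $\norm{g}_{L^1(\RR_+)} \leq 1/\gamma$, independently of the physical evolution time $T$. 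Substituting into the state-preparation bound of~\cref{cor:complexity_state} produces the $\Or(\norm{u_0}/(\gamma \norm{u(T)}))$ count of input-state queries.

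Next, I would choose the truncation parameters, writing $\tau$ for the cutoff of the $t'$-integral to avoid clashing with the physical time $T$. The exponential decay $\abs{g(t')} \leq e^{-\gamma t'}$ gives $\int_\tau^\infty \abs{g(t')} \ud t' \leq e^{-\gamma \tau}/\gamma$, and requiring this to be $\Or(\epsilon \norm{u(T)}/\norm{u_0})$ yields $\tau = \Or(\gamma^{-1} \log(\norm{u_0}/(\epsilon \gamma \norm{u(T)})))$. The near-exponential decay $e^{-c\abs{k}^\beta}$ of the kernel along the real axis then gives the $k$-space cutoff $K = \Or((\log(\norm{u_0}/(\epsilon \gamma \norm{u(T)})))^{1/\beta})$. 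Substituting into the $\alpha_A K \tau$ term from~\cref{cor:complexity_state} produces $\Or(\alpha_A \gamma^{-1} (\log(\cdot))^{1+1/\beta})$, and multiplying by the LCU prefactor $\norm{u_0}/(\gamma \norm{u(T)})$ gives the claimed matrix-query complexity.

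The main technical obstacle I anticipate is confirming that the quadrature bounds of~\cref{lem:quadrature} remain benign in the presence of the oscillatory Bessel factor and the bilinear phase. Specifically, one must verify that $\norm{t' g}_{L^\infty}$ and $\norm{g'}_{L^\infty}$ scale at worst polynomially in $\gamma^{-1}$ and $T$: exponential damping controls $\norm{t' g}_{L^\infty}$ by a maximization argument at $t' = 1/\gamma$, while $J_0'(x) = -J_1(x)$ together with the fact that $J_1(x)/x$ is bounded near $0$ shows that the apparent $t'^{-1/2}$ singularity of $g'$ at the origin is removable. These contributions enter $M_k$ and $M_t$ only polynomially and are absorbed into the $\mathrm{polylog}(1/\epsilon)$ overhead, preserving the leading $(\log(\cdot))^{1+1/\beta}$ scaling stated in the corollary.
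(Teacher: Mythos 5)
Your proposal is correct and follows essentially the same route as the paper's proof: apply \cref{cor:complexity_state} (equivalently, \cref{thm:complexity_block_encoding} plus amplitude amplification) to the shifted representation~\cref{eqn:app_1st_ODE_LCHS}, absorb the bilinear phase $e^{\I t'k\gamma}$ via an extra controlled phase on the select oracle, bound $\norm{f}_{L^1}=\Or(1)$ and $\norm{g}_{L^1}\le 1/\gamma$ using $\abs{J_0}\le 1$, and choose the cutoffs $\tau=\Or(\gamma^{-1}\log(\norm{u_0}/(\epsilon\gamma\norm{u(T)})))$ and $K=\Or((\log(\cdot))^{1/\beta})$, exactly as in the paper. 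Your additional check that $g\in C^1$ with polynomially bounded $\norm{t'g}_{L^\infty}$ and $\norm{g'}_{L^\infty}$ (via $J_0'=-J_1$ and the removable singularity at $t'=0$) is a detail the paper glosses over, and it is a correct, harmless addition.
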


\subsubsection{Initial condition without matrix inverse}

Now consider the initial condition in \cref{eqn:app_1st_ini_cond_w/o_A}.
In this case, our goal is to implement the operator $e^{-TA^{-1}}$. 
Writing
\begin{equation}
e^{-T/z'}-1=\sum_{n=1}^\infty \frac{(-1)^n}{n!}\frac{T^n}{z'^{n}},
\end{equation}
we can (formally) perform the inverse Laplace transform of each term (the inverse Laplace transform of $z'^{-\alpha}$ is $\frac{1}{\Gamma(\alpha)} t'^{\alpha-1}$), giving
\begin{equation}
\sum_{n=1}^\infty \frac{(-1)^n}{n!}\frac{T^n t'^{n-1}}{\Gamma(n)}=-\sqrt{\frac{T}{t'}}J_1(2\sqrt{Tt'}), 
\end{equation}
where $J_1$ is the Bessel function of the first kind of order 1. 
Therefore (see~\cref{lem:Laplace_1st_DE_mass} in~\cref{app:Laplace_series} for a rigorous proof)
\begin{equation}
e^{-T/z'}-1=-\int_0^\infty e^{-z't'} \sqrt{\frac{T}{t'}}J_1(2\sqrt{Tt'}) \ud t'. 
\end{equation}
By changing the variable $z' = z + \gamma/2$, we have 
\begin{equation}
e^{-T/(z+\gamma/2)}-1=-\int_0^\infty e^{-z t'} e^{-\gamma t'/2} \sqrt{\frac{T}{t'}}J_1(2\sqrt{Tt'}) \ud t'. 
\end{equation}
By replacing $z$ by $L-\gamma I/2 + iH$, which still has a positive definite real part, and using~\cref{thm:LCHS_hardy}, we have 
\begin{align}
    e^{-T A^{-1}}-I &= -\int_0^\infty \int_{\mathbb{R}}  \frac{f(k)}{1-ik} e^{-\gamma t'/2} \sqrt{\frac{T}{t'}}J_1(2\sqrt{Tt'}) e^{-it'(kL-k\gamma I/2 + H)} \ud k \ud t' \\
    &=  \int_0^{\infty} \int_{\mathbb{R}} \frac{f(k) {g}(t';T) e^{it' k \gamma/2} }{ 1-ik} e^{-it' (kL+H)} \ud k \ud t', \label{eqn:app_1st_ODE_LCHS_alg2}
\end{align}
where 
\begin{equation}
    {g}(t';T) = - e^{-\gamma t'/2} \sqrt{\frac{T}{t'}}J_1(2\sqrt{Tt'}). 
\end{equation}
Similarly to the previous case,~\cref{eqn:app_1st_ODE_LCHS_alg2} can also be implemented within our general framework despite an interacting phase factor. 
After obtaining the block encoding of $e^{-TA^{-1}} - I$, we may implement another LCU to add an identity matrix and construct $e^{-TA^{-1}} - I + I = e^{-TA^{-1}}$. 

We give the complexity of this algorithm in the following result. 
Its proof can be found in~\cref{app:applications_proof_1st_ODE_mass_alg2}.

\begin{cor}\label{cor:complexity_app_1st_ODE_alg2}
    Consider the differential equation~\cref{eqn:app_1st_ODE} with initial condition~\cref{eqn:app_1st_ini_cond_w/o_A} where the real part of $A$ is positive definite and its eigenvalues are bounded from below by $\gamma > 0$. 
    Then for any fixed $\beta \in (0,1)$, we can prepare an $\epsilon$-approximation of the state $u(T)/\|u(T)\|$ with $\Omega(1)$ success probability and a flag indicating success, using 
    \begin{equation}
        \Or\left( \frac{\norm{u_0}}{\norm{u(T)}} \frac{\alpha_A}{\gamma} \left(1+\sqrt{\frac{T}{\gamma}}\right) \left( \log\left(\frac{T \norm{u_0} }{\epsilon \norm{u(T)}\gamma } \right) \right)^{1+1/\beta} \right)
    \end{equation}
    queries to the block encoding of $A$ with block encoding factor $\alpha_A$, and 
    \begin{equation}
        \Or\left( \frac{\norm{u_0}}{\norm{u(T)}} \left(1+\sqrt{\frac{T}{\gamma}}\right) \right)
    \end{equation}
    queries to the state preparation oracle of $\ket{u_0}$. 
    Here, $\beta \in (0,1)$ is the order parameter in the kernel function of the LCHS formula. 
\end{cor}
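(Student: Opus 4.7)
The plan is to feed the LCHS representation \cref{eqn:app_1st_ODE_LCHS_alg2} for the operator $e^{-TA^{-1}}-I$ into the general machinery of \cref{thm:complexity_block_encoding,cor:complexity_state}, and then combine the result with the identity via an outer two-term linear combination of unitaries to obtain a block encoding of $e^{-TA^{-1}}$ itself. Since $\ket{\psi}=u_0/\norm{u_0}$, the quantity $h(A)\ket{\psi}$ in \cref{cor:complexity_state} equals $u(T)/\norm{u_0}$, so $\norm{h(A)\ket{\psi}}=\norm{u(T)}/\norm{u_0}$ and the ratio $\norm{u_0}/\norm{u(T)}$ will appear naturally in the final bound. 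The discretization of~\cref{eqn:app_1st_ODE_LCHS_alg2} proceeds exactly as in \cref{sec:Lap-LCHS_algorthm}, with the same extra controlled-phase modification already used in \cref{cor:complexity_app_1st_ODE} to handle the coupled factor $e^{it'k\gamma/2}$.

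The main analytic step is bounding the $L^1$ weight of the inverse Laplace transform $g(t';T)=-e^{-\gamma t'/2}\sqrt{T/t'}\,J_1(2\sqrt{Tt'})$. Using the uniform bound $|J_1|\le 1$ together with $\int_0^\infty e^{-\gamma t'/2}(t')^{-1/2}\ud t'=\sqrt{2\pi/\gamma}$, we obtain
\begin{equation}
\norm{g(\cdot;T)}_{L^1(\RR_+)} \le \sqrt{2\pi T/\gamma} = \Or(\sqrt{T/\gamma}),
\end{equation}
and the same argument with the additional factor $\sqrt{T/T_{\mathrm{tr}}}\,e^{-\gamma T_{\mathrm{tr}}/2}$ controls the tail $\norm{g(\cdot;T)}_{L^1((T_{\mathrm{tr}},\infty))}$. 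Combined with $\norm{f}_{L^1(\RR)}=\Or(1)$ and the near-exponential decay $\norm{f}_{L^1(\RR\setminus[-K,K])}=\Or(e^{-cK^\beta})$ of the kernel in \cref{eqn:kernel_intro}, the hypotheses of \cref{thm:complexity_block_encoding} are satisfied by taking $T_{\mathrm{tr}}=\Or(\gamma^{-1}\log(T\norm{u_0}/(\epsilon\gamma\norm{u(T)})))$ and $K=\Or((\log(T\norm{u_0}/(\epsilon\gamma\norm{u(T)})))^{1/\beta})$.

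Substituting these into \cref{thm:complexity_block_encoding} produces a block encoding of $e^{-TA^{-1}}-I$ with block-encoding factor $\Or(\sqrt{T/\gamma})$ using $\Or(\alpha_A K T_{\mathrm{tr}})=\Or(\alpha_A\gamma^{-1}(\log(\cdot))^{1+1/\beta})$ queries to $O_A$ per use. Wrapping this with a two-term LCU against the identity (unit coefficients; identity branch contributing block-encoding factor $1$ and zero queries to $O_A$) yields a block encoding of $e^{-TA^{-1}}$ with factor $\Or(1+\sqrt{T/\gamma})$ and the same per-use cost in queries to $O_A$. Feeding this into \cref{cor:complexity_state}, amplitude amplification then multiplies by $\Or(\norm{u_0}(1+\sqrt{T/\gamma})/\norm{u(T)})$, and the two claimed bounds drop out after simplification of the logarithmic arguments.

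The one place I expect to have to be careful is the outer LCU with the identity. Without it the prefactor would be a pure $\sqrt{T/\gamma}$ and the bound would be loose (even vacuous) in the regime $T\lesssim\gamma$, where $e^{-TA^{-1}}$ is close to $I$ and the true cost should be $\Or(1)$. Propagating the $+1$ from the identity branch through the amplitude-amplification step, and confirming that it contributes no queries to $O_A$, is what delivers the $(1+\sqrt{T/\gamma})$ factor in both the matrix-query and state-preparation complexities; everything else is a direct substitution of the $L^1$ bounds into \cref{cor:complexity_state}.
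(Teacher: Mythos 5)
Your proposal is correct and follows essentially the same route as the paper's proof: bound $\norm{g(\cdot;T)}_{L^1}=\Or(\sqrt{T/\gamma})$ via $|J_1|\le 1$, choose $T'=\Or(\gamma^{-1}\log(\cdot))$ and $K=\Or((\log(\cdot))^{1/\beta})$ from the exponential tail, invoke \cref{thm:complexity_block_encoding} (with the controlled-phase modification) to block-encode $e^{-TA^{-1}}-I$, add the identity by a two-term LCU to get the $(1+\sqrt{T/\gamma})$ factor, and finish with amplitude amplification and the error rescaling $\epsilon'=\Or(\epsilon\norm{u(T)}/\norm{u_0})$. The only differences are cosmetic (evaluating the $L^1$ integral via the Gamma function rather than a Gaussian substitution, and a slightly different but equally valid tail bound), so no changes are needed.
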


\subsubsection{Comparison with previous algorithms}

Prior to LCHS, the quantum linear differential equation algorithm with lowest query complexity was the truncated Dyson series method~\cite{BerryCosta2022}, which can approximate a quantum state $e^{-T B } \ket{v_0}$ for matrix $B$ with positive semi-definite real part. 
As the input model, this algorithm requires a block encoding of $B$ and a state-preparation oracle for $\ket{v_0}$. 
In our setup, $B = A^{-1}$, and $v_0$ can be either $A^{-1}u_0$ or $u_0$. 
Then, a straightforward approach is to first construct the block encoding of $A^{-1}$ using QSVT and then construct its exponential using the truncated Dyson series method. 

Reference~\cite[Appendix B]{TongAnWiebeEtAl2021} shows that we can construct a block encoding of $A^{-1}$ with block encoding factor $\Or(\norm{A^{-1}})$ using $\Or\left( \alpha_A \norm{A^{-1}} \log\left( {\norm{A^{-1}}}/{\epsilon } \right) \right)$ queries to the block encoding of $A$. 
Then,~\cite[Theorem 1]{BerryCosta2022} shows that preparing an $\epsilon$-approximation of $e^{-T A^{-1}} \ket{v_0} / \norm{e^{-T A^{-1}} \ket{v_0}}$ takes $\widetilde{\Or}\left( \frac{\norm{v_0}}{\norm{u(T)}}\norm{A^{-1}} T  \left(\log(1/\epsilon)\right)^2 \right)$ queries to the block encoding of $A^{-1}$, so the query complexity to the block encoding of $A$ is 
\begin{equation}\label{eqn:app_1st_ODE_Dyson_matrix_query}
    \widetilde{\Or}\left( \frac{\norm{v_0}}{\norm{u(T)}}\alpha_A \norm{A^{-1}}^2 T  \left(\log\left(\frac{1}{\epsilon}\right)\right)^3 \right). 
\end{equation}
We can then replace $\norm{v_0}$ by $\norm{u_0}$ or $\norm{A^{-1}u_0} \leq \norm{A^{-1}}\norm{u_0}$ to obtain the matrix query complexity with different initial conditions. 

Reference~\cite[Theorem 1]{BerryCosta2022} also shows that the algorithm uses $\widetilde{\Or}\left( \frac{\norm{v_0}}{\norm{u(T)}}\norm{A^{-1}} T  \log(1/\epsilon) \right)$ queries to the state preparation oracle for $\ket{v_0}$. 
This is directly the final state preparation cost when $v_0 = u_0$. 
When $v_0 = A^{-1} u_0$, the state $\ket{v_0}$ can be constructed using a quantum linear system solver. 
The optimal algorithm for this~\cite{CostaAnSandersEtAl2022} takes $\Or(\alpha_A \norm{A^{-1}} \log(1/\epsilon) )$ queries to the block encoding of $A$ and the state preparation oracle for $\ket{u_0}$. 
This extra matrix query cost is not dominant compared to~\cref{eqn:app_1st_ODE_Dyson_matrix_query}, but contributes to another multiplicative factor in the state preparation cost, which is $\widetilde{\Or}\left( \frac{\norm{v_0}}{\norm{u(T)}}\alpha_A \norm{A^{-1}}^2 T  \left(\log\left({1}/{\epsilon}\right)\right)^2 \right) = \widetilde{\Or}\left( \frac{\norm{u_0}}{\norm{u(T)}}\alpha_A \norm{A^{-1}}^3 T  \left(\log\left({1}/{\epsilon}\right)\right)^2 \right)$. 

A comparison is given in~\cref{tab:comparison_app_1st_ODE}. 
In both cases, the Lap-LCHS algorithm has better matrix query complexity and state preparation cost. 

\begin{table}[t]
    \renewcommand{\arraystretch}{2}
    \centering
    \scalebox{0.85}{
    \begin{tabular}{c|c|c}\hline\hline
    \multicolumn{3}{c}{ With initial condition $u(0) = A^{-1} u_0$ } \\\hline
        \textbf{Method} & \textbf{Queries to $A$} & \textbf{Queries to $\ket{u_0}$} \\\hline 
        This work (\cref{cor:complexity_app_1st_ODE}) & $\widetilde{\Or}\left( \frac{\norm{u_0}}{\norm{u(T)}} \alpha_A \gamma^{-2} \left( \log\left(\frac{1}{\epsilon}\right)\right)^{1+1/\beta} \right)$ & $\Or\left( \frac{\norm{u_0}}{\norm{u(T)}} \frac{1}{\gamma} \right)$ \\\hline
        QSVT~\cite{GilyenSuLowEtAl2019} + Dyson~\cite{BerryCosta2022} & $\widetilde{\Or}\left( \frac{\norm{u_0}}{\norm{u(T)}}\alpha_A \norm{A^{-1}}^3 T  \left(\log\left(\frac{1}{\epsilon}\right)\right)^3 \right) $ & $\widetilde{\Or}\left( \frac{\norm{u_0}}{\norm{u(T)}}\alpha_A \norm{A^{-1}}^3 T  \left(\log\left(\frac{1}{\epsilon}\right)\right)^2 \right)$
        \\\hline\hline
        \multicolumn{3}{c}{ With initial condition $u(0) = u_0$ } \\\hline
        \textbf{Method} & \textbf{Queries to $A$} & \textbf{Queries to $\ket{u_0}$} \\\hline 
        This work (\cref{cor:complexity_app_1st_ODE_alg2}) & $\widetilde{\Or}\left( \frac{\norm{u_0}}{\norm{u(T)}} \alpha_A \gamma^{-3/2} \sqrt{T} \left( \log\left(\frac{1 }{\epsilon} \right) \right)^{1+1/\beta} \right)$ & $\Or\left( \frac{\norm{u_0}}{\norm{u(T)}} \sqrt{\frac{T}{\gamma}} \right)$ \\\hline
        QSVT~\cite{GilyenSuLowEtAl2019} + Dyson~\cite{BerryCosta2022} & $\widetilde{\Or}\left( \frac{\norm{u_0}}{\norm{u(T)}}\alpha_A \norm{A^{-1}}^2 T  \left(\log\left(\frac{1}{\epsilon}\right)\right)^3 \right) $ & $\widetilde{\Or}\left( \frac{\norm{u_0}}{\norm{u(T)}}\norm{A^{-1}} T  \log\left(\frac{1}{\epsilon}\right) \right)$
        \\\hline\hline
    \end{tabular}
    }
    \caption{ Comparison between Lap-LCHS and the previous approach for solving the ODE~\cref{eqn:app_1st_ODE}. 
    Here, we assume the Cartesian decomposition $A = L+iH$ with $L \succeq \gamma > 0$. $\alpha_A$ is the block encoding factor of $A$, $T$ is the evolution time (and for technical simplicity, we assume $T \geq \gamma$), and $\epsilon$ is the tolerated error in the output state. }
    \label{tab:comparison_app_1st_ODE}
\end{table}

\subsubsection{Application to evolutionary partial differential equations with time-space mixed derivatives}

As we describe in this section, linear differential equations with mass matrices as in~\cref{eqn:app_1st_ODE} can describe certain 
evolutionary partial differential equations with time-space mixed derivatives, so the Lap-LCHS algorithm can be applied to such problems. 
We start with the example 
\begin{align}
    \frac{\partial u(t,x)}{ \partial t} &= \frac{\partial^2 u(t,x)}{\partial t \, \partial x} - u(t,x), \quad t \in [0,T],\, x \in [0,1], \\
    u(0,x) &= u_0(x), \\
    u(t,0) &= u(t,1). 
\end{align}
A standard technique for numerically solving partial differential equations is the method of lines, in which we first discretize all but the time variable to obtain an ODE system and then apply a numerical ODE solver. 
We apply the central difference formula $\partial_x v(t,x) \approx \frac{1}{2h} (v(t,x+h) - v(t,x-h)) $ to discretize the spatial variable $x$ with step size $h$, giving the semi-discretized system 
\begin{equation}\label{eqn:hyperbolic_PDE_mixed_deriv_semi_discretized}
    \frac{\partial u(t,x)}{ \partial t} \approx \frac{1}{2h} \left(\frac{\partial u(t,x+h)}{\partial t } - \frac{\partial u(t,x-h)}{\partial t }\right)- u(t,x). 
\end{equation}
Let $[0,h,2h,\cdots, (N-1)h]$ be the grid points for discretizing $x$, where $N$ is the number of grid points and $h = 1/N$, and $\mathbf{u}(t) = [u(t,0);u(t,h);u(t,2h);\cdots;u(t,(N-1)h)]$. 
Then from~\cref{eqn:hyperbolic_PDE_mixed_deriv_semi_discretized} we have 
\begin{equation}\label{eqn:hyperbolic_PDE_mixed_deriv_ODE}
    \frac{d \mathbf{u}}{dt} \approx D \frac{d \mathbf{u}}{dt} - \mathbf{u}, 
\end{equation}
where 
\begin{equation}
    D = \frac{N}{2} \left( \begin{array}{cccccc}
         0 & 1 & & & & -1 \\
         -1 & 0 & 1 & & & \\
         & -1 & 0 & 1 & & \\
         & & \ddots & \ddots  &\ddots & \\
         & & & -1 & 0 & 1 \\
         1 & & & & -1 & 0 \\
    \end{array} \right). 
\end{equation}
Notice that all the eigenvalues of $D$ are imaginary, 
so~\cref{eqn:hyperbolic_PDE_mixed_deriv_ODE} is exactly in the form of~\cref{eqn:app_1st_ODE} with $A = L + iH$, $L = I \succ 0$ and $H = iD$. 

More generally, we can consider the equation 
\begin{equation}
    \partial_t \mathcal{L}_x u(t,x) = - u(t,x), 
\end{equation}
where $\mathcal{L}_x = \sum_{k=0}^{m} a_k \partial_x^k$ is a differential operator with respect to the $x$ variable. 
Then, the spatially discretized system is also in the form of~\cref{eqn:app_1st_ODE}, where $A$ is the discrete version of $\mathcal{L}_x$. 
The stability condition $L \succ 0$ can be satisfied by imposing conditions on the coefficients $a_{2k}$ of the even-order derivatives. 
It is also possible to generalize this approach to the case where $x$ has higher dimension. 

\subsection{Second-order differential equations}\label{sec:secondorder}

Consider the second-order differential equation
\begin{equation}\label{eqn:app_2nd_ODE}
\frac{\ud^2 u}{\ud t^2}=Au, \quad u(0) = \ket{u_0}, \quad u'(0)=v_0,
\end{equation}
where $A$ has a positive semi-definite real part. 
The solution of \cref{eqn:app_2nd_ODE} has two branches, $e^{\pm \sqrt{A}t}$. 
Here we only consider a special scenario where the choice of $v_0$ forces the dynamics to have only a decaying branch, i.e., the solution is $u(T) = e^{- \sqrt{A} T} \ket{u_0}$. 

In the Lap-LCHS framework, we have 
\begin{equation}\label{eqn:app_2nd_ODE_Laplace}
h(z) = e^{-T\sqrt{z}}=\int_0^\infty e^{-zt'} \frac{T}{2\sqrt{\pi t'^3}} e^{-T^2 / (4t')}
\ud t'.
\end{equation}
Then the inverse Laplace transform of $h(z)$ is 
\begin{equation}\label{eqn:app_2nd_ODE_def_g}
    g(t') = \frac{T}{2\sqrt{\pi t'^3}} e^{-T^2 / (4t')}. 
\end{equation}
We can directly apply the Lap-LCHS algorithm to prepare $h(A)\ket{u_0}$.  Its complexity can be estimated using~\cref{cor:complexity_state}, which is proven in~\cref{app:applications_proof_2nd_ODE}.

\begin{cor}\label{cor:app_2nd_ODE}
    Consider the second-order differential equation in~\cref{eqn:app_2nd_ODE}, where the coefficient matrix $A$ has a positive semi-definite real part. 
    Then for any fixed $\beta \in (0,1)$, Lap-LCHS can prepare an $\epsilon$-approximation of the decaying branch $\ket{u(T)} = u(T)/\norm{u(T)}$, where $u(T) = e^{-\sqrt{A} T} \ket{u_0}$, with $\Omega(1)$ success probability and a flag indicating success, using 
    \begin{equation}
            \Or\left( \frac{\alpha_A T^2}{ \norm{u(T)}^3 \epsilon^2 }  \left( \log\left( \frac{1}{\norm{u(T)}\epsilon} \right) \right)^{1/\beta} \right)
    \end{equation}
    queries to the block encoding of $A$ with block encoding factor $\alpha_A$, and 
    \begin{equation}
        \Or\left( \frac{1}{\norm{u(T)}} \right)
    \end{equation}
    queries to the state preparation oracle for $\ket{u_0}$. 
\end{cor}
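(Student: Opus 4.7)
The plan is to reduce \cref{cor:app_2nd_ODE} to a direct application of \cref{cor:complexity_state}, using \cref{eqn:app_2nd_ODE_Laplace,eqn:app_2nd_ODE_def_g} to supply the Laplace-transform pair $(h, g)$. The main steps are: verify the hypotheses of \cref{thm:LCHS_hardy} for this particular $g$, evaluate the norms $\|f\|_{L^1(\RR)}$ and $\|g\|_{L^1(\RR_+)}$ that enter the complexity, and choose the truncation parameters $K$ and (time-)truncation parameter, call it $T_\tau$, to meet the tail conditions in \cref{cor:complexity_state}.

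First, I would verify that the $g$ in \cref{eqn:app_2nd_ODE_def_g} is non-negative and lies in $L^1(\RR_+)$ (it is a one-sided stable/L\'evy density). Since the Laplace transform at $z=0$ equals the total mass and $h(0)=e^{-T\sqrt{0}}=1$, we immediately get $\|g\|_{L^1(\RR_+)} = 1$. For the kernel $f$ in \cref{eqn:kernel_intro}, $\|f\|_{L^1(\RR)} = \Or(1)$ and $f(k)$ decays like $e^{-c|k|^{\beta}}$, so the requirement $\|f\|_{L^1(\RR\setminus[-K,K])} = \Or(\epsilon \|u(T)\|/\|g\|_{L^1(\RR_+)})$ is satisfied by choosing $K = \Or\bigl((\log(1/(\|u(T)\|\epsilon)))^{1/\beta}\bigr)$.

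The main obstacle, and the step that drives the dominant factor in the final complexity, is the time truncation: $g$ decays only \emph{algebraically} as $t'\to\infty$, since $e^{-T^2/(4t')}\to 1$ and the prefactor behaves like $T t'^{-3/2}$. An elementary estimate gives
\begin{equation}
\int_{T_\tau}^{\infty} g(t')\,\ud t' \;\le\; \frac{T}{2\sqrt{\pi}} \int_{T_\tau}^{\infty} t'^{-3/2}\,\ud t' \;=\; \frac{T}{\sqrt{\pi T_\tau}},
\end{equation}
so enforcing $\|g\|_{L^1((T_\tau,\infty))} = \Or(\epsilon\|u(T)\|/\|f\|_{L^1(\RR)})$ forces $T_\tau = \Theta\bigl(T^2/(\epsilon^2 \|u(T)\|^2)\bigr)$. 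This polynomial-in-$1/\epsilon$ truncation is what produces the $\epsilon^{-2}$ factor in the stated matrix query count; no sharper choice is available without altering the representation of $h(z)$.

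Finally, I would substitute $\|f\|_{L^1(\RR)} = \Or(1)$, $\|g\|_{L^1(\RR_+)} = 1$, $K = \Or\bigl((\log(1/(\|u(T)\|\epsilon)))^{1/\beta}\bigr)$, and $T_\tau = \Or\bigl(T^2/(\epsilon^2 \|u(T)\|^2)\bigr)$ into \cref{cor:complexity_state} with $\ket{\psi}=\ket{u_0}$ and $h(A)\ket{\psi}=u(T)$. The matrix query count becomes
\begin{equation}
\Or\!\left( \frac{1}{\|u(T)\|} \bigl( \alpha_A K T_\tau + \log(\tfrac{1}{\|u(T)\|\epsilon}) \bigr) \right) = \Or\!\left( \frac{\alpha_A T^2}{\|u(T)\|^3 \epsilon^2} \bigl(\log(\tfrac{1}{\|u(T)\|\epsilon})\bigr)^{1/\beta} \right),
\end{equation}
and the state-preparation count collapses to $\Or(1/\|u(T)\|)$, matching the claim. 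The only remaining sanity check is that the logarithm in the ``$+\log$'' term of \cref{cor:complexity_state} is absorbed into the $\alpha_A K T_\tau$ contribution, which it clearly is given the polynomial size of $T_\tau$.
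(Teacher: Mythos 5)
Your proposal is correct and follows essentially the same route as the paper: bound $\norm{f}_{L^1}$ and $\norm{g}_{L^1}$, choose $K$ poly-logarithmically, take the time-truncation parameter $\Theta(T^2/(\epsilon^2\norm{u(T)}^2))$ from the algebraic tail of $g$, and plug into \cref{cor:complexity_state}. The only (harmless) difference is that you evaluate $\norm{g}_{L^1(\RR_+)}=h(0)=1$ directly, whereas the paper computes the same bound via the substitution $s=T^2/(4t')$.
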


\cref{cor:app_2nd_ODE} shows that the matrix query complexity of Lap-LCHS depends badly on several parameters, including $T$, $\epsilon$, and $\norm{u(T)}$. 
This is mainly due to the slow decay of $g(t)$ in~\cref{eqn:app_2nd_ODE_def_g}, which requires a large truncation parameter $T'$. 
It is possible to improve this scaling in a more restricted case where the real part of $A$ is positive definite, and its eigenvalues are lower bounded by $\gamma$, using the same shifting trick as in~\cref{sec:firstorder_mass}. 
Specifically, by substituting the variable $z \to z+\gamma$ in~\cref{eqn:app_2nd_ODE_Laplace}, we have 
\begin{equation}
\frac{1}{z+\gamma}e^{-T\sqrt{z+\gamma}}=\int_0^\infty e^{-zt'} e^{-\gamma t'}\frac{T}{2\sqrt{\pi t'^3}} e^{-T^2 / 4t'}
\ud t'.
\end{equation}
Replacing $z$ by $A-\gamma I=(L-\gamma I) +i H$, we have
\begin{equation}
e^{-T \sqrt{A}}= \int_0^{\infty} \int_{\mathbb{R}} \frac{f(k){g}(t';T)e^{it'k \gamma}}{1-ik} e^{-it' (kL+H)} \ud k \ud t'. 
\end{equation}
Now 
\begin{equation}
    {g}(t';T)= e^{-\gamma t'}\frac{T}{2\sqrt{\pi t'^3}} e^{-T^2 / 4t'}, 
\end{equation}
which has an exponential decay as $t'\to \infty$, so the truncation parameter can be significantly reduced to $\Or( (1/\gamma) \log(1/\epsilon'))$ in order to bound $\norm{g}_{L^1((T',\infty))}$ by $\Or(\epsilon')$. 
The complexity of Lap-LCHS in this special scenario is captured by the following result, which can be directly obtained from the proof of~\cref{cor:app_2nd_ODE} and the new choice of $T'$. 

\begin{cor}\label{cor:app_2nd_ODE_PD}
    Consider the second-order differential equation in~\cref{eqn:app_2nd_ODE}, where the coefficient matrix $A$ has positive definite matrix real part $L \succeq \gamma > 0$. 
    Then for any fixed $\beta \in (0,1)$, Lap-LCHS can prepare an $\epsilon$-approximation of the decaying branch $\ket{u(T)} = u(T)/\norm{u(T)}$, where $u(T) = e^{-\sqrt{A} T} \ket{u_0}$, with $\Omega(1)$ success probability and a flag indicating success, using 
    \begin{equation}
            \Or\left( \frac{1}{\norm{u(T)} } \frac{\alpha_A}{\gamma} \left( \log\left( \frac{1}{\norm{u(T)}\epsilon} \right) \right)^{1+1/\beta}  \right)
    \end{equation}
    queries to the block encoding of $A$ with block encoding factor $\alpha_A$, and 
    \begin{equation}
        \Or\left( \frac{1}{\norm{u(T)}} \right)
    \end{equation}
    queries to the state preparation oracle for $\ket{u_0}$. 
\end{cor}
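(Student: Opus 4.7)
The plan is to directly invoke the general Lap-LCHS state preparation bound in \cref{cor:complexity_state} applied to the shifted integral representation
\begin{equation}
e^{-T\sqrt{A}}= \int_0^{\infty} \int_{\mathbb{R}} \frac{f(k)\,{g}(t';T)\,e^{it'k\gamma}}{1-ik}\, e^{-it'(kL+H)} \ud k \ud t',
\end{equation}
already derived in the excerpt with ${g}(t';T)=e^{-\gamma t'}\frac{T}{2\sqrt{\pi t'^3}}e^{-T^2/(4t')}$. The extra oscillatory phase $e^{it'k\gamma}$ is absorbed into the select oracle via an additional controlled phase gate, exactly as described in \cref{sec:firstorder_mass}, so the complexity estimate of \cref{cor:complexity_state} applies verbatim with the new $g$.

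First, I would control $\norm{g}_{L^1(\RR_+)}$. The unshifted density $\frac{T}{2\sqrt{\pi t'^3}}e^{-T^2/(4t')}$ integrates to $1$ on $\RR_+$ (it is the Laplace kernel of $e^{-T\sqrt{z}}$ evaluated at $z=0$), so multiplication by $e^{-\gamma t'}\le 1$ gives $\norm{g}_{L^1(\RR_+)}\le 1$. Combined with $\norm{f}_{L^1(\RR)}=\Or(1)$ for the kernel of \cref{eqn:kernel_intro}, the prefactor $\norm{f}_{L^1(\RR)}\norm{g}_{L^1(\RR_+)}$ appearing in \cref{cor:complexity_state} is $\Or(1)$.

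Next, I would choose the truncation parameter $T'$. For $t'\ge T'$, using $e^{-\gamma t'}\le e^{-\gamma T'}$ on the density,
\begin{equation}
\norm{g}_{L^1((T',\infty))}\le e^{-\gamma T'}\int_{T'}^{\infty}\frac{T}{2\sqrt{\pi t'^3}}e^{-T^2/(4t')}\ud t'\le e^{-\gamma T'}.
\end{equation}
Setting the right-hand side to $\Or(\epsilon\norm{u(T)})$, as required by \cref{cor:complexity_state}, gives $T'=\Or\bigl((1/\gamma)\log(1/(\epsilon\norm{u(T)}))\bigr)$. The near-exponential decay of the kernel $f$ from \cref{eqn:kernel_intro} gives $K=\Or\bigl((\log(1/(\epsilon\norm{u(T)})))^{1/\beta}\bigr)$ for any fixed $\beta\in(0,1)$.

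Substituting these choices into \cref{cor:complexity_state} yields a matrix query cost of
\begin{equation}
\Or\!\left(\frac{1}{\norm{u(T)}}\Bigl(\alpha_A K T'+\log\!\tfrac{1}{\epsilon\norm{u(T)}}\Bigr)\right)=\Or\!\left(\frac{1}{\norm{u(T)}}\frac{\alpha_A}{\gamma}\Bigl(\log\tfrac{1}{\norm{u(T)}\epsilon}\Bigr)^{1+1/\beta}\right),
\end{equation}
and a state preparation cost of $\Or(1/\norm{u(T)})$, matching the claimed bounds. There is no genuine obstacle here: the shift-and-decay strategy reuses the architecture of \cref{cor:complexity_app_1st_ODE,cor:app_2nd_ODE}, and the only delicate step is the tail bound on $g$, which is made painless by viewing $\frac{T}{2\sqrt{\pi t'^3}}e^{-T^2/(4t')}$ as a probability density so that the exponential shift factor immediately dominates.
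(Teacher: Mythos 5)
Your proposal is correct and follows essentially the same route as the paper, which obtains this corollary by reusing the proof of \cref{cor:app_2nd_ODE} with the shifted kernel $g(t';T)=e^{-\gamma t'}\frac{T}{2\sqrt{\pi t'^3}}e^{-T^2/(4t')}$ and the improved truncation $T'=\Or\bigl(\gamma^{-1}\log(1/(\epsilon\norm{u(T)}))\bigr)$, handling the extra phase $e^{it'k\gamma}$ exactly as in \cref{sec:firstorder_mass}. Your tail bound via the normalized (L\'evy) density and the substitution into \cref{cor:complexity_state} match the paper's argument.
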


\subsubsection{Second-order differential equations with mass matrices}\label{sec:secondorder_mass}

Analogous to the first-order case, we may also consider a second-order differential equation with a non-diagonal mass matrix
\begin{equation}\label{eqn:app_2nd_ODE_mass}
A\frac{\ud^2 u}{\ud t^2}=u, \quad u(0) = \ket{u_0}, \quad u'(0)=v_0.
\end{equation}
We focus on the decaying branch of the solution $u(T) = e^{-\sqrt{A^{-1}}T}\ket{u_0} $. Here, we briefly discuss the simulation strategy and omit the detailed complexity analysis for simplicity.
Using the fact that
\begin{equation}
e^{-T/\sqrt{z}}-1=\sum_{n=1}^\infty \frac{(-1)^n}{n!}\frac{T^n}{z^{n/2}},
\end{equation}
we can perform the inverse Laplace transform of each term (in particular, the inverse Laplace transform of $z^{-\alpha}$ is $\frac{1}{\Gamma(\alpha)} t'^{\alpha-1}$) and obtain 
(see \cref{lem:Laplace_2nd_DE_mass} in~\cref{app:Laplace_series})
\begin{equation}
e^{-T/\sqrt{z}}-1 = \int_0^{\infty} e^{-zt'} {g}(t') \ud t'
\end{equation}
where
\begin{equation}
{g}(t')=\sum_{n=1}^\infty \frac{(-1)^n T^n}{n!}\frac{t'^{n/2-1}}{\Gamma(n/2)}=
\frac{1}{2} T^2 \times{}{}_0F_2\left(;\frac{3}{2},2; \frac{T^2 t'}{4} \right) - \frac{T}{\sqrt{\pi t'}}\times {}_0F_2\left(;\frac{1}{2}, \frac{3}{2}; \frac{T^2 t'}{4} \right), 
\end{equation}
with ${}_pF_q(a_1, \ldots, a_p; b_1, \ldots, b_q; z)$ denoting the generalized hypergeometric function.

Suppose that the real part of $A$ is positive definite and its eigenvalues are lower bounded by $\gamma > 0$. 
Using the shifting trick $z \to z + \gamma$ again, we have 
\begin{equation}
e^{-T/\sqrt{z+\gamma}}-1 = \int_0^{\infty} e^{-zt'} e^{-\gamma t'} {g}(t') \ud t'. 
\end{equation}
Then, by replacing $z$ by $A - \gamma I$, we have 
\begin{equation}
e^{-T/\sqrt{A}}= \int_0^{\infty} \int_{\mathbb{R}} \frac{f(k){g}(t';T)e^{it'k \gamma}}{1-ik} e^{-it' (kL+H)} \ud k \ud t',  
\end{equation}
with 
\begin{equation}
    {g}(t';T) = e^{-\gamma t'} \left( \sum_{n=1}^\infty \frac{(-1)^n T^n}{n!}\frac{t'^{n/2-1}}{\Gamma(n/2)} \right). 
\end{equation}
So $e^{-T/\sqrt{A}}$ can be implemented by the Lap-LCHS algorithm.

\bibliographystyle{alpha}
\bibliography{reference}

\clearpage

\appendix

\section{Bounding the integral discretization error}\label{app:quadrature_error}

\begin{proof}[Proof of~\cref{lem:quadrature}]
    We focus on bounding two types of errors. 
    Choices of the parameters can be directly derived from the error bounds. 
    For the truncation error, we have 
    \begin{align}
        & \quad \norm{ \int_0^{\infty} \int_{\mathbb{R}} \frac{f(k){g}(t)}{ 1-ik} U(k,t) \ud k \ud t -  \int_0^{T} \int_{-K}^K \frac{f(k){g}(t)}{ 1-ik} U(k,t) \ud k \ud t } \\
        & \leq \norm{ \int_0^{\infty} \int_{\mathbb{R}} \frac{f(k){g}(t)}{ 1-ik} U(k,t) \ud k \ud t -  \int_0^{T} \int_{\mathbb{R}} \frac{f(k){g}(t)}{ 1-ik} U(k,t) \ud k \ud t } \\
        & \quad\quad + \norm{ \int_0^{T} \int_{\mathbb{R}} \frac{f(k){g}(t)}{ 1-ik} U(k,t) \ud k \ud t -  \int_0^{T} \int_{-K}^K \frac{f(k){g}(t)}{ 1-ik} U(k,t) \ud k \ud t } \\
        & = \norm{ \int_T^{\infty} \int_{\mathbb{R}} \frac{f(k){g}(t)}{ 1-ik} U(k,t) \ud k \ud t} +  \norm{ \int_0^{T} \int_{\mathbb{R}\setminus [-K,K]} \frac{f(k){g}(t)}{ 1-ik} U(k,t) \ud k \ud t } \\
        & \leq \norm{\frac{f}{1-ik}}_{L^1(\mathbb{R})} \norm{{g}}_{L^1((T,\infty))} + \norm{\frac{f}{1-ik}}_{L^1(\mathbb{R}\setminus[-K.K])} \norm{{g}}_{L^1(\RR_+)}\\
        & \leq \norm{f}_{L^1(\mathbb{R})} \norm{{g}}_{L^1((T,\infty))} + \norm{f}_{L^1(\mathbb{R}\setminus[-K.K])} \norm{{g}}_{L^1(\RR_+)}. 
    \end{align}

    For the quadrature error, let $V(k,j) = \frac{f(k){g}(t)}{ 1-ik} U(k,t) $. 
    We first write 
    \begin{align}
         & \quad \int_0^{T} \int_{-K}^K V(k,t) \ud k \ud t - h_k h_t \sum_{j=0}^{2K/h_k-1}\sum_{l=0}^{T/h_t-1} V(k_j,t_l) \\
         & = \sum_{j=0}^{2K/h_k-1}\sum_{l=0}^{T/h_t-1} \left( \int_{t_l}^{t_l+h_t} \int_{k_j}^{k_j+h_j} V(k,t) \ud k \ud t - h_k h_t V(k_j,t_l) \right). 
    \end{align}
    On each interval $[k_j,k_j+h_j]\times [t_l,t_l+h_t]$, we have 
    \begin{equation}
        \norm{V(k,t)-V(k_j,t_l)} \leq \sup \norm{\frac{\partial V}{\partial k}} h_k +  \sup \norm{\frac{\partial V}{\partial t}} h_t, 
    \end{equation}
    so 
    \begin{align}
         & \quad \norm{ \int_0^{T} \int_{-K}^K V(k,t) \ud k \ud t - h_k h_t \sum_{j=0}^{2K/h_k-1}\sum_{l=0}^{T/h_t-1} V(k_j,t_l)} \\
         & \leq \sum_{j=0}^{2K/h_k-1}\sum_{l=0}^{T/h_t-1} h_kh_t \left( \sup \norm{\frac{\partial V}{\partial k}} h_k +  \sup \norm{\frac{\partial V}{\partial t}} h_t \right) \\
         & = 2 K T \left( \sup \norm{\frac{\partial V}{\partial k}} h_k +  \sup \norm{\frac{\partial V}{\partial t}} h_t \right). 
    \end{align}
    Straightforward computations give 
    \begin{align}
        \frac{\partial V}{\partial k} = \frac{f(k){g}(t)}{1-ik} (-itL) e^{-it(kL+H)} + \frac{f'(k)(1-ik)+if(k)}{(1-ik)^2} {g}(t) e^{-it(kL+H)}, 
    \end{align}
    \begin{align}
        \frac{\partial V}{\partial t} = \frac{f(k){g}(t)}{1-ik} (-i)(kL+H) e^{-it(kL+H)} + \frac{f(k) {g}'(t)}{1-ik} e^{-it(kL+H)}, 
    \end{align}
    and thus 
    \begin{align}
        \sup \norm{\frac{\partial V}{\partial k}} \leq \norm{f}_{L^{\infty}(\mathbb{R})} \norm{t {g}}_{L^{\infty}(\mathbb{R})} \norm{A} + \left( \norm{f'}_{L^{\infty}(\mathbb{R})} + \norm{f}_{L^{\infty}(\mathbb{R})} \right) \norm{{g}}_{L^{\infty}(\mathbb{R})}, 
    \end{align}
    \begin{align}
        \sup \norm{\frac{\partial V}{\partial t}} \leq 2\norm{f}_{L^{\infty}(\mathbb{R})} \norm{{g}}_{L^{\infty}(\mathbb{R})} \norm{A} + \norm{f}_{L^{\infty}(\mathbb{R})} \norm{{g}'}_{L^{\infty}(\mathbb{R})}. 
    \end{align}
\end{proof}

\section{Construction of the prepare oracles}\label{app:prepare_oracle}

Here we discuss the construction of the unitaries $O_{c,l}$, $O_{c,r}$, $O_{\hat{c},l}$, and $O_{\hat{c},r}$, which prepare quantum states encoding the quadrature coefficients in their amplitudes. 
Specifically, these unitaries prepare a quantum state of the form
\begin{equation}\label{eqn:def_general_state_prep}
    \frac{1}{\sqrt{\sum_{j=0}^{M-1} |F(a+jh)|^2 }}\sum_{j=0}^{M-1} F(a+jh) \ket{j}. 
\end{equation}
Here $F(k)$ is a known complex-valued function defined on a real interval $[a,b]$, $M$ is a positive integer, and $h = (b-a)/M$ is the step size. 

In general, an $M$-dimensional quantum state can be prepared with cost $\Or(M)$ \cite{SBM06}.
However, the amplitudes of the quantum state in~\cref{eqn:def_general_state_prep} are given by known functions evaluated at discrete points, and these functions in the Lap-LCHS algorithm have closed-form expressions and are integrable (e.g., in $O_{c,r}$ we have $F(k) = \frac{f(k)}{1-ik}$ where $f(k)$ is the kernel function in LCHS and can be chosen as in~\cref{eqn:kernel_intro}). 
Then, the state preparation circuits can be constructed more efficiently, with cost only $\Or(\poly\log M)$, by the Grover-Rudolph approach~\cite{GroverRudolph2002}. 
More specifically, we first apply Grover-Rudolph to prepare a state proportional to $\sum_{j=0}^{M-1} |F(a+jh)| \ket{j}$ encoding the absolute values of $F(k)$ in its amplitudes, and then apply the unitary $\widetilde{U}: \ket{j} \rightarrow e^{i \arg (F(a+jh)) } \ket{j} $ to introduce the correct phases. 
Notice that $\widetilde{U}$ can be efficiently constructed with the help of classical arithmetic, and the overall complexity of preparing the desired state is $\Or(\poly\log M)$. 

The implementation of the Grover-Rudolph approach requires coherent arithmetic, resulting in complicated quantum circuits due to its high qubit and gate costs in practice. 
A recent work~\cite{McardleGilyenBerta2022} provides an alternative state preparation algorithm based on QSVT and avoids coherent arithmetic operations, provided the function $F(k)$ can be approximated by a degree-$d$ polynomial or Fourier series (which is also the case in Lap-LCHS). 
As shown in~\cite[Theorem 1]{McardleGilyenBerta2022}, this approach only needs $4$ ancilla qubits and $\Or(\log(M) d / \mathcal{F})$ gates, where $\mathcal{F} = \sqrt{ \frac{ \sum_{j=0}^{M-1} |F(a+jh)|^2 }{M \max_j |F(a+jh)|^2} }$. 
Notice that if the function $F(k)$ performs badly in the sense that it concentrates only on a few points, e.g., $F(a+jh)$ is non-zero only at $j = 0$, then $\mathcal{F} \sim \sqrt{1/M}$ and such a state preparation approach still has cost $\widetilde{\Or}(\sqrt{M})$. 
However, if the function $F(k)$ is continuously differentiable and in $L^1$ (which is satisfied for $O_{c,l}$ and $O_{c,r}$ and also holds for $O_{\hat{c},l}$ and $O_{\hat{c},r}$ for continuously differentiable inverse Laplace transforms), then $\mathcal{F} \approx \frac{1}{\sqrt{b-a}}\frac{\norm{F}_{L^2(a,b)}}{\norm{F}_{L^{\infty}(a,b)}}$ does not depend on the number $M$, so the gate complexity of the state preparation is $\Or(\log(M))$.

\section{Laplace transform of series of functions}\label{app:Laplace_series}

In~\cref{sec:applications}, we compute the inverse Laplace transforms of some functions by formally taking the inverse Laplace transform of each term in its series expansion. 
Here, we rigorously validate such a procedure by analyzing the term-by-term computation of the Laplace transform. 
This also validates the same procedure for the inverse Laplace transform by definition of the inverse Laplace transform. 

We first state a general convergence result, which is a direct consequence of the dominated convergence theorem. 

\begin{lem}\label{lem:Laplace_series}
    Let $g(t) = \sum_{j=1}^{\infty} a_j q_j(t)$ be a pointwise absolutely convergent function series on $(0,\infty)$. 
    Suppose that $\lim_{J \to \infty} \sum_{j=1}^{J} \int_0^{\infty} |a_j q_j(t) e^{-zt} | \ud t < \infty $ for $z \in \mathcal{D} \subset \overline{\mathbb{C}_{>}}$. Then the Laplace transform of $g(t)$ exists on $\mathcal{D}$ and is given as 
    \begin{equation}
        h(z) = \sum_{j=1}^{\infty} a_j \int_0^{\infty} q_j(t) e^{-zt} \ud t. 
    \end{equation}
\end{lem}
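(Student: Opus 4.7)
The plan is to derive the lemma from the dominated convergence theorem (or equivalently Fubini--Tonelli) applied to the partial sums of the series. Let me first set up the partial sums $g_J(t) := \sum_{j=1}^J a_j q_j(t)$. By the pointwise absolute convergence hypothesis, $g_J(t) \to g(t)$ for every $t \in (0,\infty)$, and in particular $g$ is well defined. For each finite $J$, linearity of the Lebesgue integral gives
\begin{equation}
\int_0^\infty g_J(t)\, e^{-zt}\, \ud t = \sum_{j=1}^J a_j \int_0^\infty q_j(t)\, e^{-zt}\, \ud t,
\end{equation}
so the remaining task is to interchange the limit $J\to\infty$ with the outer integral.

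For the interchange, fix $z \in \mathcal{D}$ and apply Tonelli's theorem to the non-negative integrand $|a_j q_j(t)|\, e^{-\Re(z)t}$, which yields
\begin{equation}
\int_0^\infty \sum_{j=1}^\infty |a_j q_j(t)|\, e^{-\Re(z)t}\, \ud t \;=\; \sum_{j=1}^\infty \int_0^\infty |a_j q_j(t)|\, e^{-\Re(z)t}\, \ud t.
\end{equation}
The right-hand side is finite by the hypothesis (noting $|e^{-zt}| = e^{-\Re(z)t}$ since $\Re(z)\ge 0$), so the function $G(t) := \sum_{j=1}^\infty |a_j q_j(t)|\, e^{-\Re(z)t}$ lies in $L^1((0,\infty))$ and is finite almost everywhere. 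By the triangle inequality and pointwise absolute convergence, $|g_J(t)\, e^{-zt}| \le G(t)$ uniformly in $J$, so the dominated convergence theorem applies and gives
\begin{equation}
\int_0^\infty g(t)\, e^{-zt}\, \ud t \;=\; \lim_{J\to\infty} \int_0^\infty g_J(t)\, e^{-zt}\, \ud t \;=\; \sum_{j=1}^\infty a_j \int_0^\infty q_j(t)\, e^{-zt}\, \ud t,
\end{equation}
which is the stated identity.

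There is essentially no serious obstacle here, as the hypothesis on the absolute summability of the $L^1$ norms is tailored exactly to produce the dominating function needed for DCT. The only mild subtlety is that the pointwise absolute convergence assumption guarantees $g(t)$ is defined everywhere on $(0,\infty)$, whereas the Tonelli step alone would only give finiteness almost everywhere; combining the two assumptions ensures that $g(t)$ coincides with the pointwise series on a set of full measure, so the identity holds for the genuine $g(t)$ rather than merely for its almost-everywhere representative. The proof is therefore only a few lines once Tonelli and DCT are invoked in sequence.
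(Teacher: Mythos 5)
Your proof is correct and follows essentially the same route as the paper's: you build the same dominating function $G(t)=\sum_j |a_j q_j(t)|e^{-\Re(z)t}$, justify its integrability by term-by-term integration of non-negative functions (your Tonelli step is the paper's Beppo Levi step), and then apply dominated convergence to the partial sums.
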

\begin{proof}
    We fix the complex number $z$ in the domain of interest and drop the explicit dependence on it. 
    Let $g_J(t) = \sum_{j=1}^{J} a_j q_j(t)$ and $G(t) = \sum_{j=1}^{\infty} |a_j q_j(t) e^{-zt}| $, which is well-defined due to the absolute convergence. 
    Then $|g_J(t) e^{-zt} | \leq G(t) $, and by Beppo Levi's lemma we have 
    \begin{equation}
        \int_0^{\infty} G(t) \ud t = \lim_{J \to \infty}  \sum_{j=1}^{J} \int_0^{\infty} |a_j q_j(t) e^{-zt}| \ud t < \infty. 
    \end{equation}
    Therefore, by the dominated convergence theorem, we have that $g(t) e^{-zt}$ is integrable and 
    \begin{equation}
        \int_0^{\infty} g(t) e^{-zt} \ud t =  \int_0^{\infty} \lim_{J \to \infty} g_J(t) e^{-zt} \ud t = \lim_{J \to \infty} \int_0^{\infty} g_J(t) e^{-zt} \ud t = \sum_{j=1}^{\infty} a_j \int_0^{\infty} q_j(t) e^{-zt} \ud t, 
    \end{equation}
    where the last step is due to integrability of each $a_j q_j(t) e^{-zt}$ and linearity of the integral. 
\end{proof}

Now, we can give a rigorous proof of the examples of the Laplace transform series we use in~\cref{sec:applications}. 

\begin{lem}\label{lem:Laplace_1st_DE_mass}
    The Laplace transform of $g(t) = \sum_{n=1}^\infty \frac{(-1)^n}{n!}\frac{T^n t^{n-1}}{\Gamma(n)}$ on $\mathbb{C}_{>}$ is $h(z) = e^{-T/z}  - 1$. 
\end{lem}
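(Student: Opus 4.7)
The plan is to apply \cref{lem:Laplace_series} directly with $a_n = \frac{(-1)^n T^n}{n!\,\Gamma(n)}$ and $q_n(t) = t^{n-1}$, so that $g(t) = \sum_{n=1}^\infty a_n q_n(t)$. This reduces the proof to checking the two hypotheses of the lemma and then performing the standard term-by-term Laplace transform $\int_0^\infty t^{n-1} e^{-zt}\,\mathrm{d}t = \Gamma(n)/z^n$ for $\Re z > 0$.

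First I would verify pointwise absolute convergence of the defining series on $(0,\infty)$. For fixed $t > 0$, the general term has absolute value $T^n t^{n-1}/(n!\,(n-1)!)$, and the ratio of consecutive terms is $Tt/(n(n-1)) \to 0$ as $n \to \infty$, so the series converges absolutely by the ratio test (this is essentially the Bessel-type series mentioned in the main text, since $g(t) = -\sqrt{T/t}\,J_1(2\sqrt{Tt})$).

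Next I would verify the integral summability hypothesis for any $z$ with $\Re z > 0$. A direct computation using $\int_0^\infty t^{n-1} e^{-\Re(z)\,t}\,\mathrm{d}t = \Gamma(n)/\Re(z)^n$ gives
\begin{equation}
\sum_{n=1}^\infty \int_0^\infty \left| a_n q_n(t) e^{-zt}\right|\mathrm{d}t
= \sum_{n=1}^\infty \frac{T^n}{n!\,\Gamma(n)}\cdot\frac{\Gamma(n)}{\Re(z)^n}
= \sum_{n=1}^\infty \frac{1}{n!}\left(\frac{T}{\Re z}\right)^n
= e^{T/\Re z} - 1 < \infty,
\end{equation}
so the hypothesis of \cref{lem:Laplace_series} is satisfied on all of $\mathbb{C}_{>}$.

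Finally, \cref{lem:Laplace_series} yields
\begin{equation}
h(z) = \sum_{n=1}^\infty a_n \int_0^\infty q_n(t) e^{-zt}\,\mathrm{d}t
= \sum_{n=1}^\infty \frac{(-1)^n T^n}{n!\,\Gamma(n)}\cdot\frac{\Gamma(n)}{z^n}
= \sum_{n=1}^\infty \frac{1}{n!}\left(-\frac{T}{z}\right)^n = e^{-T/z} - 1,
\end{equation}
which is the claimed identity. There is no substantive obstacle here: the proof is a straightforward application of the preceding convergence lemma, with the only mildly delicate step being the verification of the Beppo--Levi-type summability bound, whose finiteness follows from the cancellation between $\Gamma(n)$ in the numerator (from the Laplace transform of $t^{n-1}$) and $\Gamma(n)$ in the denominator of $a_n$, leaving only the exponential series in $T/\Re z$.
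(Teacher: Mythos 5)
Your proposal is correct and follows essentially the same route as the paper's proof: invoke \cref{lem:Laplace_series} with $q_n(t)=t^{n-1}$, verify the summability hypothesis via $\int_0^\infty t^{n-1}e^{-\Re(z)t}\,\ud t=\Gamma(n)/\Re(z)^n$ to get $e^{T/\Re z}-1<\infty$, and conclude by term-by-term transformation. Your explicit check of pointwise absolute convergence via the ratio test is a small addition the paper leaves implicit, but it changes nothing substantive.
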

\begin{proof}
    Notice that the Laplace transform of $t^{n-1}$ is $\Gamma(n) z^{-n}$. 
    We have 
    \begin{equation}
        \lim_{N \to \infty} \sum_{n=1}^N \int_0^{\infty} \left|\frac{(-1)^n}{n!}\frac{T^n t^{n-1}}{\Gamma(n)} e^{-zt} \right| \ud t = \lim_{N \to \infty} \sum_{n=1}^N \frac{1}{n!} \frac{T^n}{(\Re(z) )^n} = e^{T/\Re (z)} - 1 < \infty. 
    \end{equation}
    Then, by~\cref{lem:Laplace_series}, we can compute 
    \begin{equation}
        \int_0^{\infty} g(t) e^{-zt} \ud t = \sum_{n=1}^\infty \frac{(-1)^n}{n!}\frac{T^n }{\Gamma(n)} \int_0^{\infty} t^{n-1} e^{-zt} \ud t = \sum_{n=1}^\infty \frac{(-1)^n}{n!}\frac{T^n }{z^n} = h(z)
    \end{equation}
    as claimed.
\end{proof}

\begin{lem}\label{lem:Laplace_2nd_DE_mass}
    The Laplace transform of $g(t) = \sum_{n=1}^\infty \frac{(-1)^n T^n}{n!}\frac{t^{n/2-1}}{\Gamma(n/2)}$ on $\mathbb{C}_{>}$ is $h(z) = e^{-T/\sqrt{z}}  - 1$. 
\end{lem}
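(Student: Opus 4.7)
The plan is to mimic the proof of \cref{lem:Laplace_1st_DE_mass} essentially verbatim, with the exponents $n$ on $t$ replaced by $n/2$ and with $z^{-n}$ replaced by $z^{-n/2}$ under the principal branch. The key observation that drives everything is the classical Laplace transform identity $\int_0^{\infty} t^{n/2-1} e^{-zt}\,\ud t = \Gamma(n/2)\, z^{-n/2}$ for $\Re z > 0$, which cancels the $1/\Gamma(n/2)$ factor appearing in the coefficients of $g$.

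First I would apply \cref{lem:Laplace_series} with $a_n = \frac{(-1)^n T^n}{n!\,\Gamma(n/2)}$ and $q_n(t) = t^{n/2-1}$. To verify its hypothesis on a fixed $z \in \mathbb{C}_{>}$, I would compute
\begin{equation}
\int_0^{\infty} \left|\frac{(-1)^n T^n}{n!\,\Gamma(n/2)} t^{n/2-1} e^{-zt}\right|\,\ud t = \frac{T^n}{n!\,\Gamma(n/2)} \int_0^{\infty} t^{n/2-1} e^{-(\Re z) t}\,\ud t = \frac{T^n}{n!\, (\Re z)^{n/2}},
\end{equation}
where I use that $|e^{-zt}| = e^{-(\Re z) t}$ for real $t$. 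Summing over $n$ gives
\begin{equation}
\lim_{N \to \infty} \sum_{n=1}^N \frac{T^n}{n!\, (\Re z)^{n/2}} = e^{T/\sqrt{\Re z}} - 1 < \infty,
\end{equation}
so the hypothesis of \cref{lem:Laplace_series} is satisfied on all of $\mathbb{C}_{>}$.

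Next I would invoke \cref{lem:Laplace_series} to exchange the sum and the integral, yielding
\begin{equation}
\int_0^{\infty} g(t) e^{-zt}\,\ud t = \sum_{n=1}^{\infty} \frac{(-1)^n T^n}{n!\,\Gamma(n/2)} \int_0^{\infty} t^{n/2-1} e^{-zt}\,\ud t = \sum_{n=1}^{\infty} \frac{(-1)^n T^n}{n!\, z^{n/2}} = e^{-T/\sqrt{z}} - 1,
\end{equation}
where the last equality is the Taylor series of $e^{w} - 1$ evaluated at $w = -T/\sqrt{z}$ with the principal branch.

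The only mild subtlety, which I would check briefly, is justifying the formula $\int_0^\infty t^{n/2-1} e^{-zt}\,\ud t = \Gamma(n/2)\, z^{-n/2}$ for complex $z$ with $\Re z > 0$ using the principal branch of $z^{n/2}$; this follows by a standard contour deformation (or by analytic continuation of the real-$z$ identity, since both sides are holomorphic on $\mathbb{C}_{>}$). Beyond this, the argument is entirely parallel to \cref{lem:Laplace_1st_DE_mass}, so I do not anticipate any real obstacle.
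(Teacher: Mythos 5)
Your proof is correct and follows the same route as the paper: verify the absolute-convergence hypothesis of \cref{lem:Laplace_series} using $\lvert e^{-zt}\rvert = e^{-(\Re z)t}$, which sums to $e^{T/\sqrt{\Re z}}-1 < \infty$, then exchange sum and integral and apply $\int_0^{\infty} t^{n/2-1}e^{-zt}\,\ud t = \Gamma(n/2)z^{-n/2}$ termwise. Your extra remark about justifying this Gamma-integral identity for complex $z$ via analytic continuation is a fine (and slightly more careful) addition, but the argument is otherwise the paper's proof verbatim.
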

\begin{proof}
    Notice that the Laplace transform of $t^{\alpha-1}$ is $\Gamma(\alpha) z^{-\alpha}$ for $\alpha > 0$. 
    We have 
    \begin{equation}
        \lim_{N\to\infty} \sum_{n=1}^N \int_0^{\infty} \left|\frac{(-1)^n T^n}{n!}\frac{t^{n/2-1}}{\Gamma(n/2)} e^{-zt}\right| \ud t = \lim_{N\to\infty} \sum_{n=1}^N \frac{1}{n!} \frac{T^n}{(\Re (z))^{n/2}} = e^{T/\sqrt{\Re(z)}} - 1 < \infty. 
    \end{equation}
    Then, by~\cref{lem:Laplace_series}, we can compute 
    \begin{equation}
        \int_0^{\infty} g(t) e^{-zt} \ud t = \sum_{n=1}^\infty \frac{(-1)^n T^n}{n!}\frac{1}{\Gamma(n/2)} \int_0^{\infty} t^{n/2-1} e^{-zt} \ud t = \sum_{n=1}^\infty \frac{(-1)^n T^n}{n!}\frac{1}{z^{n/2}} = h(z)
    \end{equation}
    as claimed.
\end{proof}

\section{Complexity estimates for applications of Lap-LCHS}\label{app:applications_proof}

Here, we present the proofs of the complexity estimates for the various applications discussed in~\cref{sec:applications}. 
The ideas of the proofs are similar. 
In each specific application, we compute the $L^1$ norm of the functions $f(k)$ and $g(t)$ and estimate the integral truncation parameter $K$ and $T$ to satisfy the assumption in~\cref{thm:complexity_block_encoding}. 
Then, the claimed results directly follow from~\cref{thm:complexity_block_encoding} or~\cref{cor:complexity_state}. 

\subsection{Proof of \texorpdfstring{\cref{cor:app_linear_system}}{Corollary 9}}\label{app:applications_proof_linear_system}

\begin{proof}[Proof of~\cref{cor:app_linear_system}]
    In this application, we have $\norm{f}_{L^1(\RR)} = \Or(1)$ as usual. 
    The $L^1$ norm of $g(t)$ is 
    \begin{align}
        \norm{g}_{L^1(\RR_+)} = \frac{1}{\Gamma(p)} \int_0^{\infty} t^{p-1} e^{-\eta t} \ud t = \eta^{-p}. 
    \end{align}
    To bound $\norm{f}_{L^1(\RR\setminus[-K,K])}$ by $\Or(\epsilon \norm{x}/\norm{g}_{L^1(\RR_+)}) = \Or(  \epsilon \norm{x} \eta^p )$, we can choose $K = \Or(( \log(1/(\epsilon \eta \norm{x} )) )^{1/\beta})$. 
    We still need to choose $T$ such that $\norm{{g}}_{L^1((T,\infty))} = \Or(\epsilon \norm{x}/\norm{f}_{L^1(\mathbb{R})}) = \Or(\epsilon\norm{x})$. 
    Notice that 
    \begin{align}
        \norm{{g}}_{L^1((T,\infty))} &= \frac{1}{\Gamma(p)} \int_T^{\infty} t^{p-1} e^{-\eta t} \ud t \\
        & \leq  e^{-\eta T /2} \left(\frac{1}{\Gamma(p)} \int_T^{\infty} t^{p-1} e^{-\eta t/2} \ud t\right) \\
        & \leq e^{-\eta T /2} \left(\frac{1}{\Gamma(p)} \int_0^{\infty} t^{p-1} e^{-\eta t/2} \ud t\right) \\
        & = e^{-\eta T /2} (\eta/2)^{-p}. 
    \end{align}
    Then, it suffices to choose 
    \begin{equation}
        T = \Or\left( \frac{1}{\eta} \log\left( \frac{1}{\epsilon \eta \norm{x}} \right) \right). 
    \end{equation}
    Plugging these estimates back to~\cref{cor:complexity_state} gives the claimed query complexity. 
\end{proof}

\subsection{Proof of \texorpdfstring{\cref{cor:complexity_app_1st_ODE}}{Corollary 10}}\label{app:applications_proof_1st_ODE_mass}

\begin{proof}[Proof of~\cref{cor:complexity_app_1st_ODE}]
    We first discuss the complexity of implementing a block encoding of $e^{-TA^{-1}}A^{-1} $. 
    In its LCHS representation, we have 
    \begin{equation}
        f(k) = \frac{1}{2\pi e^{-2^{\beta}} e^{(1+ik)^{\beta}} },  
    \end{equation}
    and 
    \begin{equation}
        {g}(t';T)= e^{-\gamma t'}J_0(2\sqrt{Tt'}). 
    \end{equation}
    Notice that 
    \begin{equation}
        |{g}(t';T)| \leq  e^{-\gamma t'}. 
    \end{equation}
    We can estimate the $L^1$ norms of the functions as 
    \begin{equation}
        \norm{f}_{L^1} = \Or(1), \quad \norm{{g}}_{L^1} = \Or(1/\gamma),  
    \end{equation}
    and choose 
    \begin{equation}
        T' = \Or\left( \frac{1}{\gamma} \log\left( \frac{1}{\gamma \epsilon'} \right) \right)  , \quad K = \Or\left(  \left( \log\left(\frac{1}{\gamma \epsilon'}\right)\right)^{1/\beta} \right)
    \end{equation}
    such that $\norm{{g}}_{L^1([T',\infty])} = \Or(\epsilon'/\norm{f}_{L^1(\mathbb{R})})$ and $\norm{f}_{L^1(\mathbb{R}\setminus[-K.K])}  = \Or(\epsilon'/\norm{{g}}_{L^1(0,\infty)})$. 
    According to~\cref{thm:complexity_block_encoding}, we can construct an $(\alpha'=\Or(1/\gamma),\epsilon')$-block encoding of $ e^{-T A^{-1}} A^{-1}$ using 
    \begin{equation}\label{eqn:app_1st_ODE_proof_eq1}
        \Or\left( \frac{\alpha_A}{\gamma} \left( \log\left(\frac{1}{\gamma \epsilon'}\right)\right)^{1+1/\beta} + \log\left(\frac{1}{\gamma\epsilon'}\right)\right) =  \Or\left( \frac{\alpha_A}{\gamma} \left( \log\left(\frac{1}{\gamma \epsilon'}\right)\right)^{1+1/\beta} \right)
    \end{equation}
    queries to the block encoding of $A$. 

    Applying this block encoding to the input state $\ket{0}\ket{u_0}$ gives 
    \begin{equation}
        \frac{1}{\alpha'} \ket{0} B \ket{u_0} + \ket{\perp}, 
    \end{equation}
    where $B$ is a matrix such that $\norm{B - e^{-TA^{-1}}A^{-1}} \leq \epsilon'$. 
    Measuring the ancilla register onto $0$ gives the state $B\ket{u_0}/\norm{B\ket{u_0}}$. This is an approximation of the desired state 
    \begin{equation}
        \frac{e^{-TA^{-1}}A^{-1}\ket{u_0}}{\norm{e^{-TA^{-1}}A^{-1}\ket{u_0}}}
    \end{equation}
    up to error 
    \begin{equation}
        \norm{ \frac{B\ket{u_0}}{ \norm{B\ket{u_0}} } - \frac{ e^{-TA^{-1}}A^{-1}\ket{u_0} }{ \norm{e^{-TA^{-1}}A^{-1}\ket{u_0}} }  } \leq \frac{2}{ \norm{e^{-TA^{-1}}A^{-1}\ket{u_0}} } \norm{B - e^{-TA^{-1}}A^{-1}} \leq \frac{2 \epsilon' \norm{u_0} }{ \norm{u(T)} }. 
    \end{equation}
    To bound the final error by $\epsilon$ and the success probability by $\Omega(1)$, it suffices to choose 
    \begin{equation}
        \epsilon' = \Or\left( \frac{ \norm{u(T)} }{ \norm{u_0} } \epsilon \right) 
    \end{equation}
    and implement $\Or(\alpha'/\norm{B\ket{u_0}}) = \Or\left(\frac{1}{\gamma} \frac{\norm{u_0}}{\norm{u(T)}}\right)$ rounds of amplitude amplification. 
    Plugging these choices back into~\cref{eqn:app_1st_ODE_proof_eq1} gives the overall query complexity. 
\end{proof}

\subsection{Proof of \texorpdfstring{\cref{cor:complexity_app_1st_ODE_alg2}}{Corollary 11}}\label{app:applications_proof_1st_ODE_mass_alg2}

\begin{proof}[Proof of~\cref{cor:complexity_app_1st_ODE_alg2}]
    We start by estimating the complexity of block-encoding $e^{-TA^{-1}} - I$ based on~\cref{eqn:app_1st_ODE_LCHS_alg2} by estimating the $L^1$ norm and truncation parameter of the functions 
    \begin{equation}
        f(k) = \frac{1}{2\pi e^{-2^{\beta}} e^{(1+ik)^{\beta}} }  
    \end{equation}
    and 
    \begin{equation}
        {g}(t';T) = - e^{-\gamma t'/2} \sqrt{\frac{T}{t'}}J_1(2\sqrt{Tt'}). 
    \end{equation}
    We have $\norm{f}_{L^1} = \Or(1)$ as usual. 
    By bounding the Bessel function by $1$ and substituting the variable $s = \sqrt{Tt'}$, we have 
    \begin{align}
        \norm{{g}}_{L^1} &\leq \int_0^{\infty} e^{-\gamma t'/2} \sqrt{\frac{T}{t'}} \ud t' \\
        &= 2 \int_0^{\infty} e^{- \frac{s^2}{2T/\gamma} } \ud s \\
        & = \Or\left( \sqrt{\frac{T}{\gamma}} \right). 
    \end{align}
    We choose $K$ and $T'$ so that $\norm{{g}}_{L^1([T',\infty])} = \Or(\epsilon'/\norm{f}_{L^1(\mathbb{R})})$ and $\norm{f}_{L^1(\mathbb{R}\setminus[-K.K])}  = \Or(\epsilon'/\norm{{g}}_{L^1(0,\infty)})$. 
    Using similar techniques as in estimating $\norm{{g}}_{L^1}$, we obtain 
    \begin{align}
    \norm{{g}}_{L^1([T',\infty])} & \leq \int_{T'}^{\infty} e^{-\gamma t'/2} \sqrt{\frac{T}{t'}} \ud t' \\
        &= 2 \int_{\sqrt{T T'}}^{\infty} e^{- \frac{s^2}{2T/\gamma} } \ud s \\ 
        & \leq 2 e^{- \frac{TT'}{4T/\gamma} } \int_{0}^{\infty} e^{- \frac{s^2}{4T/\gamma} } \ud s \\
        & = \Or\left( e^{-\gamma T'/4 } \sqrt{\frac{T}{\gamma}}  \right). 
    \end{align}
    Therefore, it suffices to choose 
    \begin{equation}
        T' = \Or\left( \frac{1}{\gamma} \log\left( \frac{T}{\epsilon' \gamma} \right) \right), \quad K = \Or\left( \left( \log\left(\frac{T}{\epsilon' \gamma} \right) \right)^{1/\beta} \right). 
    \end{equation}
    According to~\cref{thm:complexity_block_encoding}, we can construct an $(\Or(\sqrt{T/\gamma}), \epsilon')$-block encoding of $e^{-TA^{-1}} - I$ using 
    \begin{equation}\label{eqn:proof_app_1st_ODE_alg2_eq1}
        \Or\left( \frac{\alpha_A}{\gamma} \left( \log\left(\frac{T}{\epsilon' \gamma} \right) \right)^{1+1/\beta} + \log\left( \frac{T}{\epsilon'\gamma }\right) \right) = \Or\left( \frac{\alpha_A}{\gamma} \left( \log\left(\frac{T}{\epsilon' \gamma} \right) \right)^{1+1/\beta} \right)
    \end{equation} 
    queries to the block encoding of $A$.

    Now the block encoding of $e^{-TA^{-1}}$ can be directly constructed by using LCU for $e^{-TA^{-1}} - I$ and $I$. 
    According to~\cite[Lemma 52]{GilyenSuLowEtAl2019}, this is an $(\Or(1+\sqrt{T/\gamma}), \epsilon')$-block encoding with the same query complexity. 
    Using the same analysis in the proof of~\cref{cor:complexity_app_1st_ODE}, to bound the error in the final output state by $\epsilon$ and the failure probability by $1-\Omega(1)$, it suffices to choose 
    \begin{equation}
        \epsilon' = \Or\left( \frac{\norm{u(T)}}{\norm{u_0}} \epsilon \right)
    \end{equation}
    and run $\Or\left( \left(1+ \sqrt{T/\gamma} \right)\frac{\norm{u_0}}{\norm{u(T)}} \right)$ rounds of amplitude amplification. 
    Plugging these estimates back into~\cref{eqn:proof_app_1st_ODE_alg2_eq1} yields the desired complexity estimates. 
\end{proof}

\subsection{Proof of \texorpdfstring{\cref{cor:app_2nd_ODE}}{Corollary 12}}\label{app:applications_proof_2nd_ODE}

\begin{proof}[Proof of~\cref{cor:app_2nd_ODE}]
    To use~\cref{cor:complexity_state}, we only need to estimate the $L^1$ norms of $f$ and $g$ and determine the integral truncation parameters. 
    We have $\norm{f}_{L^1(\RR)} = \Or(1)$ as usual, and by substituting the variable $s = T^2/(4t')$, we have 
    \begin{equation}
        \norm{g}_{L^1(\RR_+)} = \int_0^{\infty} \frac{T}{2\sqrt{\pi t'^3}} e^{-T^2 / (4t')} \ud t' = \int_0^{\infty} \frac{1}{\sqrt{\pi s}} e^{-s} \ud s = \Or(1). 
    \end{equation}
    Then we choose $K$ and $T'$ such that $\norm{{g}}_{L^1((T',\infty))} = \Or(\epsilon \norm{u(T)})$ and $\norm{f}_{L^1(\mathbb{R}\setminus[-K,K])}  = \Or(\epsilon \norm{u(T)})$. 
    This can be achieved by choosing $K = \Or( (\log(1/(\norm{u(T)}\epsilon)  ))^{1/\beta} )$ as usual, and we take $T' = \Or(T^2/(\epsilon\norm{u(T)})^2)$ because 
    \begin{align}
        \norm{{g}}_{L^1((T',\infty))} = \int_{T'}^{\infty} \frac{T}{2\sqrt{\pi t'^3}} e^{-T^2 / (4t')} \ud t' 
         \leq \int_{T'}^{\infty} \frac{T}{2\sqrt{\pi t'^3}}  \ud t' = \frac{T}{\sqrt{\pi T'}}. 
    \end{align}
\end{proof}

\end{document}